\documentclass[a4paper,twocolumn,11pt,unpublished]{quantumarticle}
\pdfoutput=1

\makeatletter
\long\def\addtocontents#1#2{%
  \protected@write\@auxout
    {\let\label\@gobble \let\index\@gobble \let\glossary\@gobble \let\footnote\@gobble}%
    {\string\@writefile{#1}{#2}}}
\makeatother

\usepackage{amsmath,amssymb,amsthm,mathtools}
\usepackage{hyperref}
\usepackage{cleveref}
\usepackage{braket}
\numberwithin{equation}{section}

\usepackage[
  backend=biber,
  style=numeric-comp
]{biblatex}

\AtEveryBibitem{%
  \clearfield{note}%
  \clearfield{annotation}%
  \clearfield{addendum}%
}

\usepackage{microtype}
\usepackage{dblfloatfix}
\AtBeginDocument{\raggedbottom}

\newtheorem{theorem}{Theorem}
\newtheorem{problem}{Problem}
\newtheorem{proposition}{Proposition}
\newtheorem{definition}{Definition}
\newtheorem*{question*}{Question}
\newtheorem{result}{Result}

\title{How to make quantum cheese: efficient geometry oracles for exponentially many pseudorandom microstructures}
\author{Alice Barthe}
\affiliation{PsiQuantum, 700 Hansen Way, Palo Alto, CA 94304, USA}
\date{}

\begin{document}

\begin{abstract}
    Quantum algorithms for simulating linear systems are often formulated under oracle access assumptions. 
    A central question is when such oracles can be implemented by polynomial-size quantum circuits. 
    In this paper, we study this question for materials specified by rules rather than by exhaustive descriptions.
    We focus on textured materials with exponentially many geometric features. 
    In two settings, we show that, without additional structure, describing such geometries yields Grover-type lower bounds, making the corresponding quantum oracles intractable in general. 
    In contrast, when suitable structure is imposed, we identify a broad family of pseudorandom locally textured materials whose geometry can be queried through a polynomial-size quantum circuit. 
    We provide explicit circuit constructions for these oracles and verify their behaviour through numerical simulation.
\end{abstract}

\maketitle

\section{Introduction}
Many quantum algorithms for linear systems, Hamiltonian simulation, and differential-equation subroutines are formulated in an input-access model.
Their complexity is typically measured in terms of queries to sparse-access oracles, or more generally to block-encodings of the relevant operator~\cite{gilyn2019quantum-d2b}. 
This abstraction separates the algorithmic cost of processing an input from the cost of coherently providing the input for a specific problem instance.
A central practical question is therefore: when can such an abstract access model actually be implemented by a polynomial-size quantum circuit?

This question is particularly important for quantum algorithms for partial differential equations. 
After discretization on a grid, such algorithms can, under suitable assumptions, achieve complexity scaling sublinearly in the number of grid points, improving over classical methods whose cost is typically at least linear in the grid size~\cite{childs2021high-precision-36a,montanaro2016quantum-e06,harrow2009quantum-373}. 
This potential speed-up is often used to get higher spatial resolution. 
However, in many materials applications the grid spacing is already fixed by the physics one wishes to resolve. 
Additional computational power can then be used instead to simulate a larger material volume at the same resolution.

In this regime, the size of the input geometry becomes a central issue. 
The material may contain exponentially many microscopic features relative to the number of qubits used to index the grid. 
An explicit description of all these features would defeat the purpose of the quantum speed-up. 
We are therefore led to materials whose geometry is specified by a generative rule rather than by an explicit list of all microscopic features.
Examples include porous media~\cite{hyman2015statistical-249,hyman2014stochastic-b66}, as well as morphologies arising from dynamical formation processes.

This motivates the question that will guide the rest of this paper:
\begin{question*}
    Can exponentially many geometrical features be encoded by a polynomial-size quantum circuit?
\end{question*}
To that effect we introduce the notion of geometry oracle, and the following associated problem. 
\begin{problem}
    \label{prob:geomoracle}
    Let $g : \mathcal{D} \to \{0,1\}$ describe the absence or presence of material on a domain $\mathcal{D}$.
    The goal is to compile the corresponding geometry oracle $G$ as a polynomial quantum circuit acting on a discretized grid with $N=2^n$ points as follows
    $$
    G\lvert x\rangle \lvert 0\rangle = \lvert x\rangle \lvert g(x)\rangle.
    $$
\end{problem}
The need for such oracles appears, for example, in quantum algorithms for fluid-dynamics where one must distinguish wall nodes from fluid nodes. 
Explicit examples are found in Equation (26) in~\cite{jennings2025simulating-a85} or in Equation (96) in~\cite{schleich2025arbitrary-4d0}.
Microstructure-resolved simulations could also be used to estimate effective mechanical properties of materials, such as the elastic moduli of porous foams and cellular solids~\cite{carolis2024effect-186,roberts2000elastic-50a}.

In \Cref{sec:lowerbounds}, we study two scenarios where the absence of structure in the geometry yields inefficient quantum oracles in general. 
In the first scenario, the material is the result of a dynamical process, whose solution is accessed as an amplitude encoded quantum state. 
In the second scenario, the material is hollowed by exponentially many arbitrary holes, accessed from a quantum database.
We show that for both of these cases, in general the construction of the desired oracle is search-hard, that is, it reduces to Grover search~\cite{grover1996fast-71b}. 
We also identify exceptions where the construction may be efficient, and provide explicit circuit constructions for such cases.
\begin{result}
    Without structure, geometry oracles as defined in \Cref{prob:geomoracle} may be search-hard.
\end{result}

In \Cref{sec:circuits}, we show that adding structure to this problem yields a broad family of materials for which polynomial-size quantum geometry oracle as defined in \Cref{prob:geomoracle} can be constructed.
At the heart of this construction, we implement symmetric encryption schemes in order to obtain pseudorandom permutations of the computational basis states. 
To our knowledge, this application of pseudorandomness on quantum computers constitutes a significantly different direction to how it has been used so far, for example for cryptographic tasks, circuit sampling, or tomography.
We use this to get a quantum circuit which on an exponentially large grid, flags pseudorandomly exactly $M$ sites, that we call centres.
The circuit remains polynomially-sized even for exponentially many centres $M\in\Theta(\exp(n))$.
We use this centres' binary map as the basic source of disorder for a broad range of geometries. 
Our main contribution is to show that as long as the geometric features are local, that is the value of each site is influenced by a fixed neighbourhood of fixed cardinality, the geometry has a polynomial-size quantum oracle.
\begin{result}
    For a structured family of geometries with exponentially many local features (as in \Cref{sec:circuits}), geometry oracles can be implemented by polynomial-size quantum circuits.
\end{result}
We show that this family of geometries covers a broad range of materials of interest.
We provide extensions to the framework in \Cref{sec:extension}.

\section{Lower bounds on oracles for unstructured geometries}
\label{sec:lowerbounds}
\subsection{The need for structure}
A first approach to \Cref{prob:geomoracle} for a homogeneous material with random holes could be to generate a list of holes pseudorandomly on a classical computer and use that to create a quantum oracle. 
However, for exponentially many holes, this simply would not be efficient, because the description of the features would be exponentially long.
A quantum oracle needs to have a polynomial description, although this is not sufficient.
It is in fact a common misconception that sparsity, bounded spectral norm, and a short description should be enough to guarantee an efficient quantum oracle.
In this section, our argument is that if the geometry encodes a hard problem, or simply lacks structure, it is impossible in general to get an efficient quantum oracle.
We show this through two practical examples, and also formalize an abstract obstruction in Appendix~\ref{app:proof_th1}.
These examples motivate the additional structure imposed later in \Cref{sec:circuits}.

\subsection{Materials geometry from amplitude encoded states}
For some materials, the geometry emerges from the dynamics of the formation process. 
Porous structures, for example, can arise through self-organization mechanisms, described by partial differential equations.
Examples include phase separation in battery electrodes, which is often modelled by Cahn--Hilliard--reaction equations~\cite{guo2016li-6fe}, pattern formation in hydrogels~\cite{nabika2020pattern-0f6}, and polymers organisation~\cite{ohta1986equilibrium-5d6}.

Quantum algorithms solving partial differential equations, including reaction-diffusion processes~\cite{lockwood2025quantum-f02,kumar2025two-f0e,an2023quantum-d8d}, typically output the solution as an amplitude encoded quantum state. 
Suppose, for example, that a quantum circuit $U$ prepares the discretized formation field $a$ on a grid $\{x_k\}_{k=0}^{N-1}$, with $N=2^n$, as
\begin{equation}
    U \ket{0^n}
    =
    \frac{1}{\|a\|_2}
    \sum_{k=0}^{N-1} a(x_k)\ket{k}.
    \label{eq:Ua-main}
\end{equation}
A binary material geometry is then often obtained by thresholding this field, by declaring whether $a(x_k)$ lies above or below a prescribed level.
However, amplitude encoding does not immediately provide coherent pointwise access to this thresholded geometry. 
The question is therefore whether one can construct the geometric oracle $G$ as in \Cref{prob:geomoracle} from the state-preparation circuit $U$.

\begin{problem}
\label{prob:precedure}
    Let $a \in [-1,1]^N$ be a real vector of length $N=2^n$, with negative entries separated away from $0$ by a known real $\eta>0$, 
    that is $\forall k\,,a_k\in[-1,-\eta]\cup[0,1]$.
    Given black-box access to a unitary $U$ satisfying
    \begin{equation}
        U \ket{0^n}
        =
        \frac{1}{\|a\|_2}
        \sum_{k=0}^{N-1} a_k \ket{k},
        \label{eq:Ua-def}
    \end{equation}
    construct, using $U$ and $U^\dagger$, a coherent sign oracle
    \begin{equation}
        V \ket{k}\ket{0}
        =
        \ket{k}\ket{\mathbf{1}[a_k<0]}.
        \label{eq:Va-goal}
    \end{equation}
\end{problem}

We give a construction based on coherent amplitude estimation.
The detailed circuit is described in Appendix~\ref{app:procedural}, but we give a high-level description below. 
Using $U$ and $U^\dagger$, one can build a Hadamard-test circuit whose success probability on input $k$ is
\begin{equation}
    p_k
    =
    \frac12\left(1+\frac{a_k}{\|a\|_2}\right).
\end{equation}
Thus determining the sign of $a_k$ is equivalent to deciding whether $2p_k$ lies above $1$ or below  $1-\eta/\|a\|_2$. 
After estimating $p_k$, the circuit reversibly compares the estimate with the threshold, writes the bit $\mathbf{1}[a_k<0]$ into an output qubit, and uncomputes all auxiliary registers. 
This yields the desired sign oracle with query complexity
\begin{equation}
\label{eq:complexity_precedure}
    O\left(\frac{\|a\|_2}{\eta}\right)
\end{equation}
uses of $U$ and $U^\dagger$.
Consequently, provided that the state-preparation circuit $U$ itself has polynomial size, the conversion is efficient whenever $\|a\|_2/\eta$ is polynomial in $n$. 
It is for example the case when $a$ is sparse with non-zero values well-separated from zero, as in polynomially many indices $k$ for which $a_k=-1$ and the rest is null.
This identifies a regime in which amplitude-encoded solutions of formation dynamics can be converted into efficient coherent
access to the induced material geometry.

We next show that, in general, \Cref{prob:precedure} contains unstructured search as a special case. 
Let $z\in\{0,\dots,N-1\}$ be an unknown marked item, and define
\begin{equation}
    a_k :=
    \begin{cases}
    -1, & k=z,\\
    +1, & k\neq z.
    \end{cases}
\end{equation}
Then $\|a\|_2=\sqrt{N}$ and, for $\eta=1$, the sign oracle
\begin{equation}
    V\ket{k}\ket{0}
    =
    \ket{k}\ket{\mathbf{1}[a_k<0]}
\end{equation}
is exactly the membership oracle for the marked item $z$.
Moreover, the corresponding state-preparation unitary is obtained from the standard search phase oracle
\begin{equation}
    O_z\ket{k} = (-1)^{\mathbf{1}[k=z]}\ket{k}
\end{equation}
by setting
\begin{equation}
    U\ket{0^n}
    =
    O_z H^{\otimes n}\ket{0^n}
    =
    \frac{1}{\sqrt{N}}
    \sum_{k=0}^{N-1} a_k\ket{k}.
\end{equation}
Thus any black-box procedure that converts $U$ into the sign oracle $V$ would, in particular, solve unstructured search. 
By the Grover lower bound, this requires $\Omega(\sqrt{N})$ queries in the worst case. 
This matches the upper bound of \Cref{eq:complexity_precedure}, since here $\|a\|_2/\eta=\sqrt{N}$.

\begin{proposition}
    \label{thm:equadiffmat_hard}
    Any black-box procedure that solves \Cref{prob:precedure} for all valid inputs has worst-case query complexity $\Omega(\sqrt{N})$.
\end{proposition}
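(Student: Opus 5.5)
We prove the bound by reduction from unstructured search, following the instance family already introduced above. Suppose $\mathcal{A}$ is a black-box procedure that, for every valid input $(a,\eta)$ of \Cref{prob:precedure}, produces the sign oracle $V$ using $T(N)$ queries to $U$ and $U^\dagger$. From $\mathcal{A}$ we build a search algorithm that finds an unknown marked item $z\in\{0,\dots,N-1\}$ with $O(T(N))$ queries to the phase oracle $O_z$. The Grover/BBBV lower bound of $\Omega(\sqrt{N})$ queries for unstructured search then forces $T(N)\in\Omega(\sqrt{N})$.

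\textbf{Validity of the instance.} Set $a_k=-1$ for $k=z$ and $a_k=+1$ otherwise, with $\eta=1$. Every entry lies in $[-1,-\eta]\cup[0,1]$, so this is a valid input.

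\textbf{Simulating the queries.} Take $U:=O_zH^{\otimes n}$. Then $U\ket{0^n}=\frac{1}{\sqrt N}\sum_k a_k\ket{k}$, which matches \Cref{eq:Ua-def} with $\|a\|_2=\sqrt N$. Its inverse is $U^\dagger=H^{\otimes n}O_z$, since $O_z$ is self-inverse. Each query to $U$ or $U^\dagger$ is therefore implemented with exactly one query to $O_z$ plus $z$-independent gates.

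The black box may also be given controlled versions of $U$, as in the Hadamard test used in the upper bound. Controlled-$U$ is controlled-$O_z$ conjugated by controlled Hadamards. Controlled-$O_z$ costs $O(1)$ queries to $O_z$ by the standard trick: run $O_z$ on the query register, write the $k=z$ bit into an ancilla via phase kickback, and uncompute. Equivalently, one can assume the standard XOR oracle $\ket{k}\ket{b}\mapsto\ket{k}\ket{b\oplus\mathbf 1[k=z]}$, from which controlled phase oracles follow directly. In either case the query count grows by at most a constant factor.

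\textbf{Finishing.} Once $\mathcal{A}$ yields $V$, we have $V\ket{k}\ket 0=\ket k\ket{\mathbf 1[k=z]}$, the membership oracle for $z$. We can stop here, or, to match the usual search formulation, run Grover's algorithm with $V$ for $O(\sqrt N)$ more calls. The cleaner route is to view the whole of $\mathcal{A}$ as an algorithm querying $O_z$.

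Apply the adversary/hybrid lower bound directly to the task of outputting $V$. Producing $V$ exactly, or to constant diamond-norm accuracy, lets us distinguish $z$ from every alternative $z'$ with a single application of $V$ on input $\ket{z}$. This is impossible with $o(\sqrt N)$ queries by the BBBV hybrid argument: for a random $z$, the total query weight on $z$ is $T/N$ on average, which bounds how far the final output can move.

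\textbf{Main obstacle.} The delicate step is the formalisation of ``outputs a unitary''. The Grover bound concerns decision or search problems, while $\mathcal{A}$ outputs an operator. We handle this by composition. Search with $\mathcal{A}$ plus Grover costs $O(T\sqrt N)$, which is too weak. So instead we use the decision version: decide whether a marked item exists, with the all-$+1$ vector also a valid instance. The algorithm constructs $V$ with $T$ queries, then applies $V$ to a uniform superposition and checks the output ancilla. This alone does not succeed with high probability, so we argue via the hybrid method on the constructed channel itself. Its action on $\ket{z}\ket 0$ must differ by a constant between the unmarked instance and the instance marked at $z$, for every $z$. Summing the hybrid bound $\sum_z\|\cdot\|^2\le O(T^2)$ over $z$ against the requirement that each of the $N$ terms be $\Omega(1)$ gives $T^2\in\Omega(N)$.

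The care needed is in stating the hybrid bound for a circuit that receives an arbitrary input register $\ket{k}$ rather than $\ket{0}$. The per-$z$ distance still satisfies
$$\bigl\|\,\mathcal{A}_z\ket{z}\ket{0}-\mathcal{A}_\emptyset\ket{z}\ket{0}\,\bigr\|\le 2\sum_t\|\Pi_z\ket{\psi_t^{(z)}}\|.$$
We resolve this by averaging over the input $k=z$ with the maximally entangled input state. That is, we run $\mathcal{A}$ on half of $\frac1{\sqrt N}\sum_k\ket k\ket k$, which reduces the problem to a standard fixed-input hybrid argument.
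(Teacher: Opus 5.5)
Your reduction is the paper's own: the instance $a_k=\pm1$ with $\eta=1$, the choice $U=O_zH^{\otimes n}$, and an appeal to the Grover bound. Your ``main obstacle'' paragraph correctly spots the step that the paper's one-line sketch skips: holding $V$ means holding the membership oracle for $z$, not having found $z$. But your hybrid repair fails, and no repair can succeed on this instance family, because there $V$ costs a single query. Since $UH^{\otimes n}=O_z$, you can put an ancilla in $\ket{+}$, apply controlled-$H^{\otimes n}$ and then controlled-$U$ to $\ket{k}$, and finally apply $H$ to the ancilla. This gives exactly $\ket{k}\ket{\mathbf 1[k=z]}=V\ket{k}\ket{0}$. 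Your simulation paragraph in fact asserts that the XOR oracle $\ket{k}\ket{b}\mapsto\ket{k}\ket{b\oplus\mathbf 1[k=z]}$ costs $O(1)$ queries, and that oracle \emph{is} $V$. The same objection applies to the paper's claim that converting $U$ into $V$ ``would solve unstructured search''.

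At the level of your argument, the inequality $\sum_z\|\mathcal A_z\ket{\phi}-\mathcal A_\emptyset\ket{\phi}\|^2\le 4T^2$ needs one input $\ket{\phi}$ shared by all $z$, so that $\sum_z\|\Pi_z\ket{\psi_t}\|^2\le 1$ at each step. With input $\ket{z}$ the intermediate states depend on $z$, and the per-step sum can be of order $N$; the one-query circuit above does exactly this. Feeding in half of a maximally entangled state restores a common input but destroys the gap. The ideal outputs for instance $z$ and for the unmarked instance then differ only on the branch $k=z$, of amplitude $1/\sqrt N$. Each squared summand is therefore only required to be of order $1/N$, and the inequality yields nothing beyond $T=\Omega(1)$.

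The missing idea is that hardness must come from the freedom in $U$ away from $\ket{0^n}$, rather than from wiring a search oracle into $U$. Any $U=AO_zB$ with fixed $A,B$ hands out $O_z$ in one query. Instead, for each instance pick a $U$ that depends on $a$ continuously in operator norm, so that instances disagreeing in one coordinate have nearly identical unitaries. For a real unit vector $\ket{\psi}\neq\ket{0^n}$, take the Householder reflection $U_\psi=I-2\ket{u}\bra{u}$ with $\ket{u}=(\ket{0^n}-\ket{\psi})/\|\ket{0^n}-\ket{\psi}\|$. It satisfies $U_\psi\ket{0^n}=\ket{\psi}$ and $U_\psi^\dagger=U_\psi$.

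Now take $a^{(\emptyset)}=(1,\dots,1)$, and let $a^{(k)}$ equal it except $a^{(k)}_k=-1$; both are valid with $\eta=1$. The two normalized states differ by $2/\sqrt N$, while $\|\ket{0^n}-\ket{\psi}\|\approx\sqrt2$. Hence the two Householder unitaries, and their controlled versions, differ by $O(1/\sqrt N)$ in operator norm. A $T$-query circuit run on the fixed input $\ket{k}\ket{0}$ therefore produces final states within $O(T/\sqrt N)$ of each other on the two instances. A bounded-error $V$, however, must leave the output qubit in $\ket{0}$ for one instance and $\ket{1}$ for the other. This gives $T=\Omega(\sqrt N)$ already for a single query point $k$, by a plain fixed-input hybrid argument with no averaging over $z$.
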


\subsection{Arbitrary circular holes from a database oracle}
\label{subsec:hardholes}

In this subsection we consider materials hollowed out by $M$ circular holes, whose descriptions are stored in a quantum-accessible database. 
We work in two dimensions for the sake of the example, but the same argument extends directly to balls in $d$ dimensions. 
The geometry oracle associated with a single disk is easy, given its centre and radius, membership can be computed by reversible arithmetic with polynomial overhead in the bit precision. 
However deciding whether a site lies in the union of $M$ arbitrary disks requires taking an OR over $M$ independent geometry functions. 
We show that, without additional structure, this contains unstructured search.

Consider \Cref{prob:geomoracle} for a material specified by $M$ disks, indexed by $s$, with descriptions accessible through a database oracle
\begin{equation}
    U_{\mathrm{db}}\ket{s}\ket{0}
    =
    \ket{s}\ket{x_s,y_s,r_s},
\end{equation}
where $(x_s,y_s)$ is the centre of disk $s$, and $r_s$ is its radius. 
The geometry function for disk $s$ is
\begin{equation}
    g_s(x,y)
    =
    \mathbf{1}\!\left[(x-x_s)^2+(y-y_s)^2\le r_s^2\right].
\end{equation}
Given $(x_s,y_s,r_s)$, the corresponding reversible circuit
\begin{align}
    V_{\mathrm{disk}}
    &\ket{x,y}_{xy}\ket{x_s,y_s,r_s}_{c}\ket{0}_b
    =\\
    &\ket{x,y}_{xy}\ket{x_s,y_s,r_s}_{c}\ket{g_s(x,y)}_b
    \label{eq:oracle_disk}
\end{align}
has polynomial-size implementation, since it only performs subtraction, squaring, addition, and comparison.
The full geometry function is the union-membership function
\begin{equation}
    g(x,y)
    =
    \bigvee_{s=1}^M g_s(x,y).
\end{equation}
We now show that evaluating this geometry is hard in the worst case. 
Let $z\in\{0,1\}^M$ be an arbitrary bit string, accessed through the bit-query oracle
\begin{equation}
    O_z\ket{s}\ket{b}
    =
    \ket{s}\ket{b\oplus z_s}.
\end{equation}
From $z$, define a database of disks by
\begin{equation}
    U_{\mathrm{db}}^{(z)}\ket{s}\ket{0}
    =
    \begin{cases}
        \ket{s}\ket{0,0,1}, & z_s=1,\\
        \ket{s}\ket{3,0,1}, & z_s=0.
    \end{cases}
\end{equation}
Thus, if $z_s=1$, disk $s$ is the unit disk centred at the origin, while if $z_s=0$, it is the unit disk centred at $(3,0)$.
These two disks are disjoint, and the fixed site $(0,0)$ belongs to the former but not the latter. 
Hence
\begin{equation}
    g_s(0,0)=z_s,
\end{equation}
and therefore
\begin{equation}
    g(0,0)
    =
    \bigvee_{s=1}^M z_s.
\end{equation}

Moreover, one query to $U_{\mathrm{db}}^{(z)}$ can be implemented using one query to $O_z$ and constant-size reversible logic, since the circuit only chooses between two hard-coded triples according to the value of $z_s$.
Consequently, any bounded-error quantum algorithm that evaluates the union geometry at $(0,0)$ using $T$ queries to $U_{\mathrm{db}}$ would give a bounded-error quantum algorithm for $\mathrm{OR}_M$ using $T$ queries to $O_z$. 
Since $\mathrm{OR}_M$ has quantum query complexity
\begin{equation}
    Q(\mathrm{OR}_M)=\Theta(\sqrt{M})
\end{equation}
by the Grover lower bound~\cite{beals2001quantum-b49}, the union-membership problem has worst-case query complexity
\begin{equation}
    \Omega(\sqrt{M}).
\end{equation}
In particular, if $M$ is exponential in the input size, arbitrary database access to the holes is not sufficient to obtain an efficient geometry oracle.

\begin{proposition}
    \label{thm:arbitrarydisks_hard}
    For a geometry composed of $M$ arbitrary circular holes specified by a quantum-accessible database, the geometry oracle as in \Cref{prob:geomoracle} has worst-case quantum query complexity $\Omega(\sqrt{M})$.
\end{proposition}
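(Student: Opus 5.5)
The plan is a query-complexity reduction from $\mathrm{OR}_M$ to evaluating the geometry oracle at one fixed site, following the construction given just before \Cref{thm:arbitrarydisks_hard}.

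\textbf{Step 1: the hard family of databases.} For an arbitrary $z\in\{0,1\}^M$, accessed through the bit oracle $O_z$, define the disk database $U_{\mathrm{db}}^{(z)}$ as above. Disk $s$ is the unit disk centred at $(0,0)$ if $z_s=1$, and the unit disk centred at $(3,0)$ if $z_s=0$.

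First I verify that this is a valid instance of the problem. Both candidate triples are legal disk descriptions with constant bit precision. I will also embed $(0,0)$ and $(3,0)$ in the discretized grid; a translation to the grid's centre and a rescaling handle this if needed. Then $g_s(0,0)=z_s$, because $0\le 1$ and $9>1$. Hence $g(0,0)=\bigvee_s z_s=\mathrm{OR}_M(z)$.

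\textbf{Step 2: simulating database queries.} Each call to $U_{\mathrm{db}}^{(z)}$, and each call to its inverse, is implemented with one call to $O_z$ plus $O(1)$ gates:
\begin{enumerate}
\item compute $z_s$ into an ancilla with $O_z$;
\item controlled on that ancilla, write the hard-coded triple $(0,0,1)$ or $(3,0,1)$ into the output register by CNOTs, which also acts correctly on nonzero output registers via XOR;
\item uncompute the ancilla with a second call to $O_z$.
\end{enumerate}
So each database query costs at most two calls to $O_z$. For the inverse, the same circuit works because it is self-inverse on the XOR-encoded register.

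\textbf{Step 3: transferring the lower bound.} Suppose an algorithm implements $G$ to bounded error using $T$ queries to $U_{\mathrm{db}}$. Run it on input $\ket{0,0}\ket{0}$ and measure the output qubit. This yields $\mathrm{OR}_M(z)$ with bounded error using at most $2T$ queries to $O_z$. The bounded-error quantum query complexity of $\mathrm{OR}_M$ is $\Omega(\sqrt M)$ (Beals et al.\ via the polynomial method, or BBBV). Therefore $T=\Omega(\sqrt M)$.

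The only delicate point is what "implementing the oracle" means. An approximate unitary implementation within operator-norm error $\epsilon<1/3$ still gives a bounded-error output on the fixed basis input. Adaptive algorithms and algorithms using ancillas are also covered, because the simulation in Step 2 is query-by-query. I expect this step, pinning down the error model, to be the main (mild) obstacle. I would handle it by stating the proposition for bounded-error implementations and noting that exact implementations are a special case.
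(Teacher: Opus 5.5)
Your proposal is correct and is essentially the paper's own reduction: the same two-disk database $U_{\mathrm{db}}^{(z)}$, the observation $g(0,0)=\mathrm{OR}_M(z)$, simulation of each database query by $O(1)$ queries to $O_z$, and the $\Omega(\sqrt{M})$ lower bound for $\mathrm{OR}_M$. Your compute--uncompute simulation uses two $O_z$ calls instead of the paper's one, which only changes constants. Your remarks on embedding the grid and on the bounded-error model are harmless refinements that the paper leaves implicit.
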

This lower bound shows that efficient geometry access cannot follow from database access alone. 
The constructions of \Cref{sec:circuits} therefore impose additional structure, the features are not arbitrary database entries, but are generated from a compact pseudorandom centre set together with fixed local transformations.

\section{Pseudorandom local features}
\label{sec:circuits}

In the previous section, we considered two unstructured problems that do not have an efficient quantum geometry oracle for exponentially many  features.
In this section, we show that extra assumptions on the features enable an efficient quantum geometry oracle.

At the heart of our construction, we implement reversible classical circuits based on symmetric encryption schemes, in order to obtain pseudorandom permutations of the computational basis states.
We use this to construct the basic source of disorder in our model: a subroutine that, given an exponentially large grid of size $N=2^n$, pseudorandomly flags exactly $M$ sites on the grid, that we call centres.
We write the corresponding centres binary map as $c_k$, where $k$ is the key for the pseudorandom generator, and the oracle acts as
\begin{equation}
    B^{(k,M)}\ket{x}\ket{0}=\ket{x}\ket{c_k(x)}.
    \label{eq:intro_center_oracle}
\end{equation}
Even when the number of centres $M\in\exp(n)$, this circuit remains polynomial.

Starting from this random-centre oracle, we construct more interesting geometries based on convolution.
We show that there exist polynomial-size geometry oracles for a broad range of materials that have a local microstructure, in the sense that the geometry value of a site depends on a constant number of other sites.
This is our main contribution, which we formalize in \Cref{prop:conv_eff}.
% A first example is to hollow out fixed-shape holes around the centres.
% More generally, one can apply a fixed local transformation to the random binary field, in the form of a convolution with a kernel $\Lambda$, and then threshold it with $\tau$ to recover a binary geometry oracle.
% Writing the convolution by $*$, this yields a geometry oracle
% \begin{equation}
%     G^{(k,M,\Lambda,\tau)}\ket{x}\ket{0}
%     =
%     \ket{x}\ket{\mathbf{1}[(c_k * \Lambda)(x)\geq \tau]}.
%     \label{eq:intro_thresh_oracle}
% \end{equation}
% Our main contribution is \Cref{prop:conv_eff}, which shows that this geometry oracle can be realized by a polynomial-size quantum circuit.

In the rest of this section, we give the details of this construction.
First, we introduce the circuit primitives, then the pseudorandom permutation, the random-centre oracle, the fixed-shape hole geometry and finally convolution-based constructions.
We keep the detailed gate counting in Appendix~\ref{app:costs} and circuit designs in Appendix~\ref{app:formula}.
For all constructions proposed here, we wrote code in PsiQuantum's software suite Construct\footnote{\url{https://www.psiquantum.com/construct}} compiling the circuits end-to-end.
We ran simulations to validate the subroutine behaviours and queried all coordinates to examine what the random materials we generate look like.

\subsection{Circuit primitives}
\label{subsec:primitives}

In this subsection we collect the list of standard reversible blocks that will be used throughout the rest of the construction.
We do not claim originality for these primitives, they are included for completeness, in order to fix notation, and to make the later cost estimates explicit.
Several implementations exist for most of them, some more gate frugal and some more qubit frugal.
In this paper, we systematically choose implementations with the lowest ancilla cost.

Our notation is as follows: For a circuit, the subscripts indicate the registers on which it acts.
When the action is asymmetric, for example one register is modified in place while the other is left unchanged, we indicate this with an arrow in the subscript.
Numerical parameters are written as superscripts.
All arithmetic is modulo the size of the register, that we note $q$.
We also do not count single-qubit $X$ gates, and we treat simple re-indexings of the wires as free.
The primitives used below are the following, and their detailed costs are in Appendix~\ref{app:costs_prim}.

\paragraph{Copy.}
For two $q$-qubit registers $a$ and $b$, we write
\begin{equation}
    C_{a\to b}\ket{x}_a\ket{0}_b = \ket{x}_a\ket{x}_b.
\end{equation}
This is implemented by $q$ CNOT gates, one on each pair of corresponding qubits.

\paragraph{Quantum-quantum addition.}
For two $q$-qubit registers $a$ and $b$, we write
\begin{equation}
    A_{a\to b}\ket{x}_a\ket{y}_b = \ket{x}_a\ket{x+y}_b.
\end{equation}
There are different implementations of such an in-place adder, some with lower gate count~\cite{gidney2018halving-981}, some with lower ancilla cost~\cite{cuccaro2004new-62d}, we choose the latter.

\paragraph{Classical-quantum addition.}
For a $q$-qubit register $x$ and a fixed integer $\delta$, we write
\begin{equation}
    A_{x}^{(\delta)}\ket{z}_x = \ket{z+\delta}_x.
\end{equation}
We use the construction of~\cite{gidney2025classical-quantum-cba}.
When several coordinate registers are present, for example $\mathbf{x}=(x_1,\dots,x_d)$ and $\delta=(\delta_1,\dots,\delta_d)$, we write
\begin{equation}
    A_{\mathbf{x}}^{(\delta)} = \prod_{j=1}^d A_{x_j}^{(\delta_j)},
\end{equation}
acting component-wise.
For an addition controlled by the one-qubit control register $c$, we write
\begin{equation}
    A_{c\to x}^{(\delta)}\ket{b}_c\ket{z}_x
    =
    \ket{b}_c\ket{z + b \delta}_x.
\end{equation}
In words, the constant $\delta$ is added to $x$ if and only if the control qubit $c$ is equal to $1$.

\paragraph{Comparison with a fixed threshold.}
For a $q$-qubit register $x$, a one-qubit flag register $f$, and a fixed integer $M$, we write
\begin{equation}
    L_{x\to f}^{(M)}\ket{z}_x\ket{0}_f = \ket{z}_x\ket{\mathbf{1}[z<M]}_f.
\end{equation}
Here the flag is flipped if and only if the integer encoded by $x$ is strictly less than $M$.
We use the low-ancilla implementation of~\cite{yuan2023improved-985}.

\paragraph{Bitwise XOR with a fixed constant.}
For a $q$-qubit register $x$ and a fixed bit-string $k\in\{0,1\}^q$, we write
\begin{equation}
    X_x^{(k)}\ket{z}_x = \ket{z\oplus k}_x.
\end{equation}
This is implemented by applying an $X$ gate on the $j$-th qubit whenever the $j$-th bit of $k$ is equal to $1$.

\paragraph{Cyclic bit rotations.}
For a $q$-qubit register $x$ and an integer $u$, we write $R_x^{(u)}$ for the cyclic rotation of the bits of $x$ by $u$ positions.
For example
\begin{equation}
    b_0b_1b_2b_3b_4 \xrightarrow[R^{(-2)}]{} b_2b_3b_4b_0b_1.
\end{equation}
These operations are free in our cost model, because they correspond only to a re-indexing of the wires.

\subsection{Pseudorandom permutations}
\label{subsec:prp}
The first ingredient of our construction is an efficiently computable family of pseudorandom permutations on bit-strings.
At this level, any reversible family of pseudorandom permutations with polynomial circuit size would be suitable, for example Feistel-type constructions~\cite{luby1988how-668}.
For concreteness, we choose the symmetric cipher SPECK~\cite{cryptoeprint2013404}, and its reversible implementation following~\cite{anand2020progress-36a}.
We do not develop a new symmetric cipher, we just adopt existing best practices and architectures, which guide all parameter choices that follow.

The SPECK cipher encrypts an $n$-bit long message $x$, with $n$ necessarily even.
It has fixed rotation parameters, two integers $u$ and $v$.
A good choice is to take both $u$ and $v$ relatively prime to $n/2$, different from one another, and of size comparable to a constant fraction of $n$, for example of order $n/6$.
In addition to a message, any symmetric encryption scheme takes a key $k$, whose knowledge is necessary to decrypt the message.
In the case of SPECK the key $k$ is a series of $m$ subkeys, each an $n/2$-bit long string.
They are typically generated pseudorandomly classically.
In a sense, they inject classical randomness in the encryption scheme.

We split it into two $n/2$-bit registers, traditionally called left and right, and write
\begin{equation}
    x = (l,r).
\end{equation}
Using the primitives introduced in \Cref{subsec:primitives}, one round of the permutation is the circuit
\begin{equation}
    S_{l,r}^{(k;u,v)}
    =
    C_{l\to r}\,
    R_r^{(v)}\,
    X_l^{(k)}\,
    A_{r\to l}\,
    R_l^{(-u)}.
    \label{eq:speck_round}
\end{equation}
Acting from right to left, this means the following.
First, the left register is rotated by $-u$.
Then the right register is added onto the left register.
Then the round key is XORed onto the left register.
Then the right register is rotated by $v$.
Finally, the left register is XORed onto the right register.
The circuit is shown in \Cref{fig:circ_one_round}.
\begin{figure}
    \centering
    \includegraphics[width=1\linewidth]{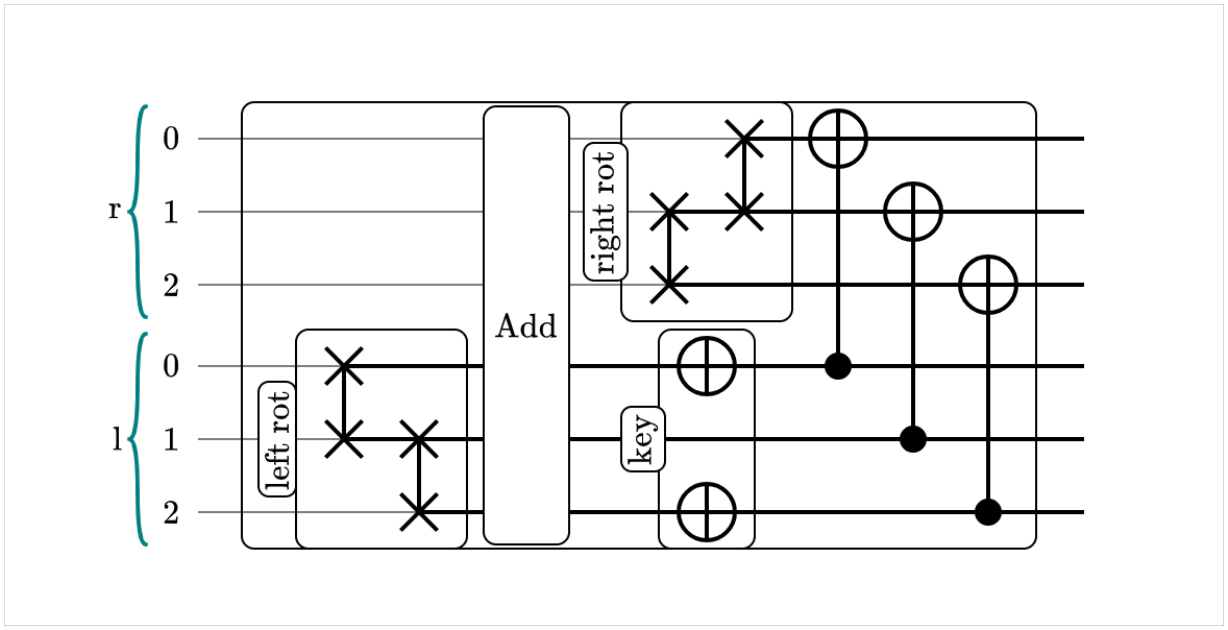}
    \caption{Quantum circuit for a single round of SPECK cipher~\cite{cryptoeprint2013404} for $n=6$ bits, with key $k=5$, and rotation parameters $u=1$ and $v=1$.}
    \label{fig:circ_one_round}
\end{figure}

The full permutation is obtained by composing $m$ such rounds.
If $\mathbf{k}=(k_1,\dots,k_m)$ is the sequence of round keys, we write
\begin{equation}
    P_x^{(\mathbf{k};u,v)}
    =
    S_{l,r}^{(k_m;u,v)}
    \cdots
    S_{l,r}^{(k_1;u,v)}.
    \label{eq:full_prp}
\end{equation}
which we shorten to $P_x^{(k)}$, with $k$ denoting the full set of subkeys, also called key schedule.

The more rounds are applied, the closer the resulting pseudorandom permutation is to a genuinely random one, but the more costly the circuit becomes.
For our application, the level of randomness required is much looser than in cryptographic use.
There is therefore a trade-off between pseudorandomness quality and circuit cost.
We discuss this point further in Appendix~\ref{app:randomness}, but the number of rounds is at most polynomial in $n$, and as a practical rule of thumb, taking a number of rounds growing linearly with the number of bits, namely $m\in\Theta(n)$, gives a good compromise.
With this choice, we get an explicit polynomial-size pseudorandom permutation on $n$-bit strings, which is the only property that will be needed in the next subsection.

\subsection{Random-centre oracle}
\label{subsec:rndcenters}

We now use the pseudorandom permutation of \Cref{subsec:prp} to construct what we call a centres binary map with exactly $M$ selected sites, uniformly spread over the grid up to the quality of the pseudorandomness.
This is the source of disorder in the constructions below.

For simplicity we describe the construction on a single $n$-bit register $x$, encoding the grid coordinates in binary.
This can represent a $d$-dimensional grid of size $2^n$ after concatenating the coordinate registers into a single bit-string.
The geometry of the grid itself plays no role at this stage.
Fix an integer $M<2^n$ and a key schedule $k$ for the pseudorandom permutation.
We define the centres binary map as follows,
\begin{equation}
    c_k(x) = \mathbf{1}\!\left[P^{(k)}(x)<M\right],
    \label{eq:def_random_centers}
\end{equation}
where the comparison is with respect to the integer value of the $n$-bit string.
First we apply the pseudorandom permutation to the coordinate, and then flag it if the resulting integer lies among the first $M$ values in lexicographic order.
Because $P^{(k)}$ is bijective, the map $c_k$ selects exactly $M$ sites, and because it is close to a random permutation, these sites are approximately uniformly distributed on the $2^n$ grid.
We call them centres.
\begin{equation}
    \left\{x\in\{0,1\}^n \,:\, c_k(x)=1\right\}
\end{equation}

The working principle is as follows.
One copies the coordinate register onto a work register, applies the pseudorandom permutation to that work register, compares the result to the threshold $M$, stores the comparison bit into the output qubit, and then uncomputes all work registers.
The circuit is detailed in Appendix~\ref{app:formula_centers}, and illustrated in \Cref{fig:circ_centres}, in which we introduce the notation $Q^{(k,0)}$.

\begin{figure}
    \centering
    \includegraphics[width=\linewidth]{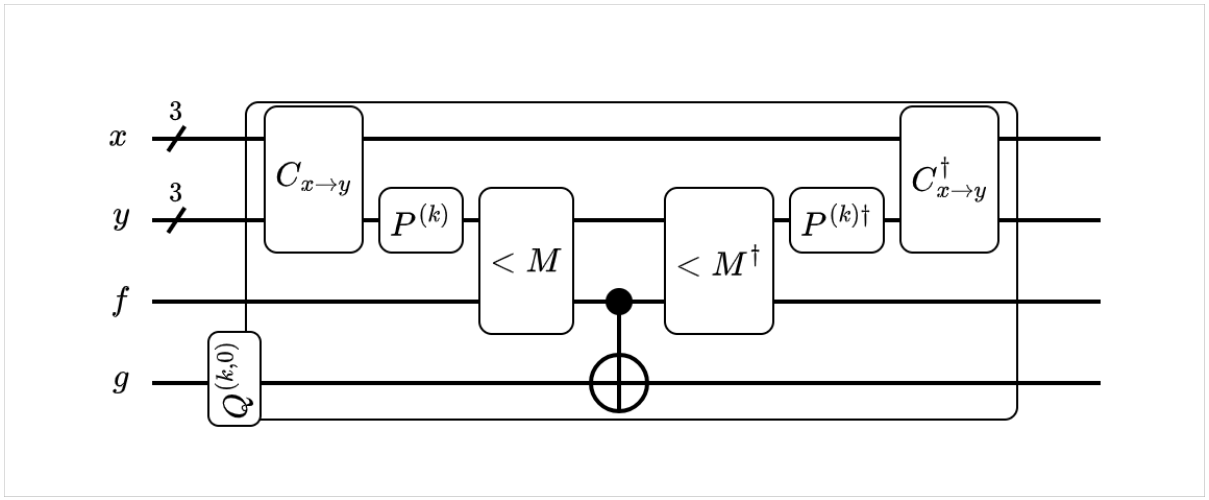}
    \caption{Quantum circuit flagging $M$ random centres. $x$ is the input register, copied into the register $y$. The pseudorandom permutation $P^{(k)}$ made of multiple rounds of \Cref{fig:circ_one_round} is applied to $y$. The result is compared to $M$ using a comparator $<M$ raising flag $f$. The result is stored in $g$, and uncomputation is performed to leave $y$ and $f$ clean.}
    \label{fig:circ_centres}
\end{figure}

We emphasise that the circuit size is polynomial in $n$, even when $M$ itself is exponentially large in $n$.
A detailed derivation of the cost is given in Appendix~\ref{app:costs}.

\subsection{Fixed-shape holes}
\label{subsec:fixedshape}
We now consider a material where uniformly-distributed holes with a fixed shape are hollowed out.
Starting from the random-centre map of \Cref{subsec:rndcenters}, one obtains such holes by carving out a hole of fixed shape around every centre.

Let us write the grid coordinate as $\mathbf{x}=(x_1,\dots,x_d)$, with total size $n$ bits.
Fix a finite set of offsets
\begin{equation}
    \Delta = \{\delta_t\}_{1\leq t\leq T}\subset \mathbb{Z}^d,
\end{equation}
which describes the shape with respect to its centre.
We give an example in \Cref{fig:diamond}.
All shifts are implemented modulo the grid size, so the geometry is periodic.

\begin{figure}[h]
    \centering
    \includegraphics[width=0.3\linewidth]{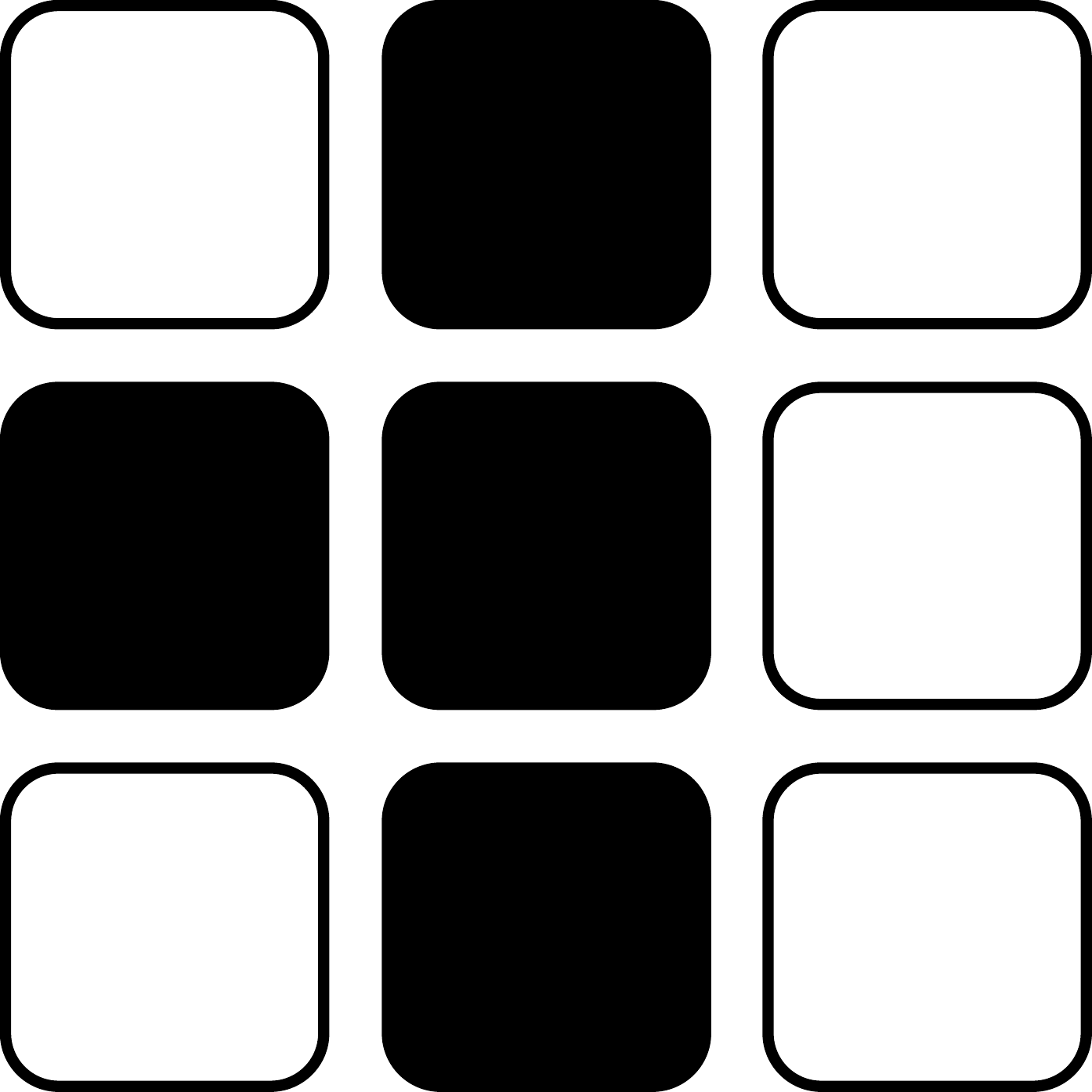}
    \caption{Example of hole shape on a $3\times 3$ grid, with the centre pixel corresponding to the $[0,0]$ coordinate. The corresponding set of $T=4$ offsets $\delta$, represented as black pixels, is $\Delta =\{[0,0],[-1,0],[0,-1],[0,1]\}$.}
    \label{fig:diamond}
\end{figure}

Given the random-centre map $c_k$ of \Cref{eq:def_random_centers}, we define the associated fixed-shape hole geometry by
\begin{equation}
    h_{\Delta,k}(\mathbf{x})
    =
    \mathbf{1}\!\left[
    \exists \delta\in\Delta,\ c_k(\mathbf{x}-\delta)=1
    \right].
    \label{eq:def_shape_map}
\end{equation}
A site is hollow if, after subtracting one of the offsets of the shape, it lands on a centre.
To get the corresponding circuit, for each offset $\delta_t$, query the random-centre oracle at the shifted site $\mathbf{x}-\delta_t$, store the resulting bit in a flag qubit, apply an OR to these $T$ flags, and finally uncompute the flags.
Since $T$ is fixed, this final OR contributes only a constant overhead, in contrast to the arbitrary circular holes in \Cref{subsec:hardholes}.
Therefore the asymptotic cost is simply $T$ times the cost of one shifted random-centre query.
The details of the circuit construction can be found in Appendix~\ref{app:formula_fixedshape} and the associated costs in Appendix~\ref{subsec:costs_fixedshape}.
The circuit is shown in \Cref{fig:circ_oracle}. 

\begin{figure}
    \centering
    \includegraphics[width=1\linewidth]{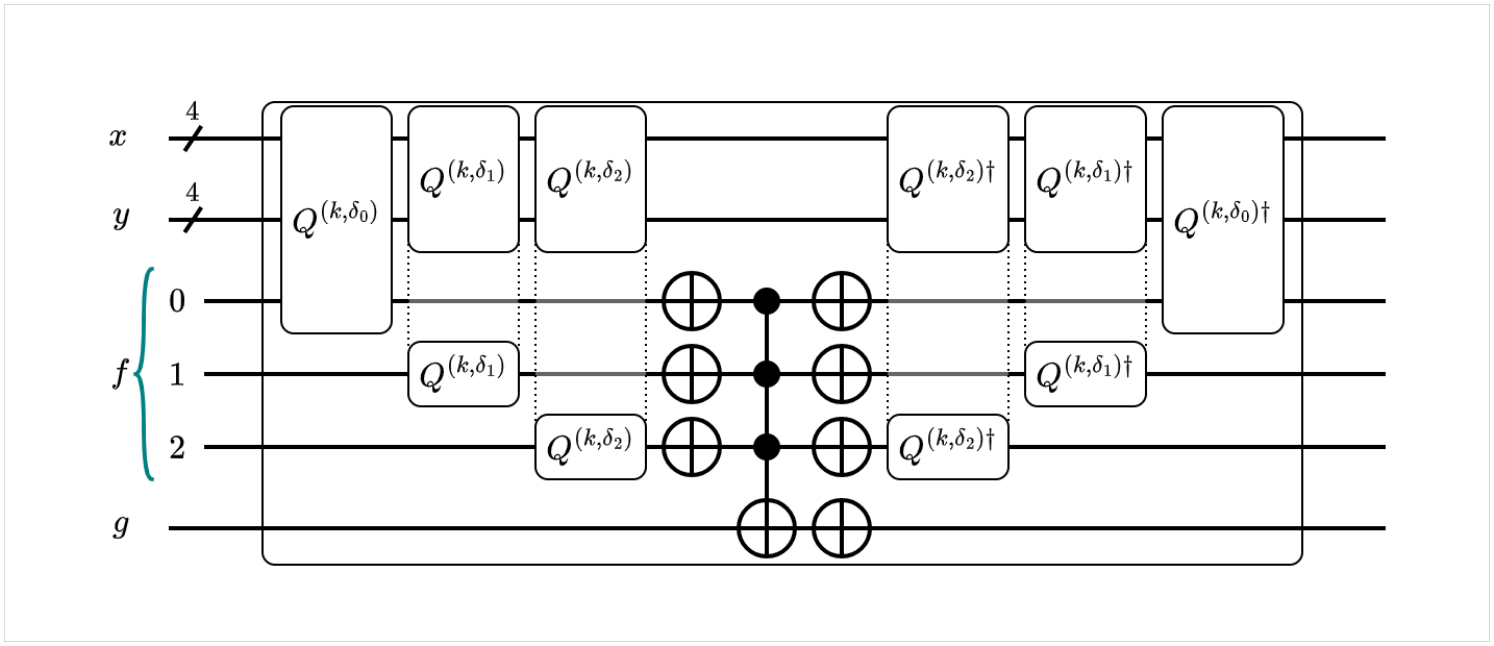}
    \caption{Quantum circuit for the fixed-shape hole oracle, for a set $\Delta$ of $T=3$ offsets $\delta_t$. Each $Q^{(k,\delta_t)}$ retrieves whether the site shifted by $\delta_t$ is a centre in a flag qubit $f_t$ using the circuit in \Cref{fig:circ_centres} . Then the result of $\bigvee_t f_t$ is stored in $g$, and the rest is uncomputed to leave the registers $y$ and $f$ clean.}
    \label{fig:circ_oracle}
\end{figure}

The result is shown in \Cref{fig:quantumcheese}.

\begin{figure}
    \centering
    \includegraphics[width=0.5\linewidth]{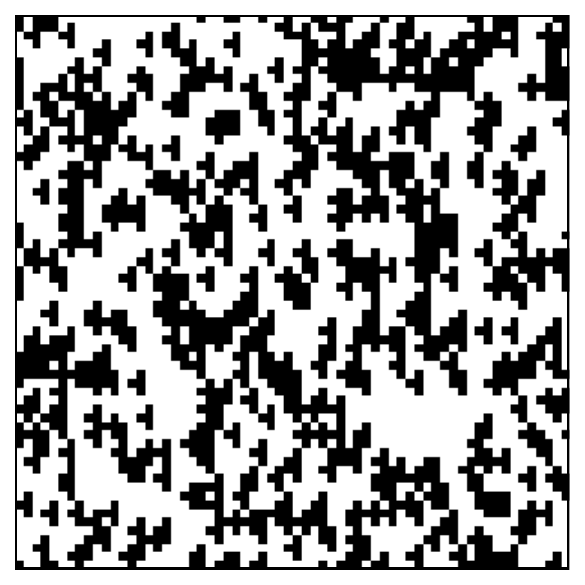}
    \caption{A realisation of quantum cheese: For a quantum oracle solving \Cref{prob:geomoracle} for a material where holes of fixed shape have been hollowed out, result of simulating the oracle by querying all coordinates $(x,y)$ on a $2^6\times2^6 = 64\times 64$ grid. We select $M=512$ holes shaped as  in \Cref{fig:diamond}, with rotations parameters $(u,v)=(1,7)$ and $m=5$ rounds with key $k=[38, 51, 28, 14, 42]$.}
    \label{fig:quantumcheese}
\end{figure}

This subsection shows a first application of what can be done with the random-centre oracle.
It yields a solution to \Cref{prob:geomoracle} for a first class of materials.
Furthermore, the idea of applying a fixed local transform to all centres generalises.
That is what we do in the next subsection, where we apply arbitrary fixed local kernels with integer coefficients to the centres.
This, followed by a threshold, enables more complex textures.

\subsection{Local convolution and thresholding}
\label{subsec:convkernel}
In this subsection, we generalise the idea behind the fixed holes geometry to access an even broader family of materials.
Starting from the centres binary map oracle, one first applies a fixed local kernel with integer coefficients, obtaining an integer-valued local feature, and then threshold that feature to recover a binary geometry oracle.

Let $\mathbf{x}=(x_1,\dots,x_d)$ denote the grid coordinate, with total size $n$ bits.
We write the centres binary map as in \Cref{eq:def_random_centers}
and we use the corresponding XOR oracle
\begin{equation}
    B_{\mathbf{x}\to f}^{(k,M)}
    \ket{\mathbf{x}}_{\mathbf{x}}\ket{b}_f
    =
    \ket{\mathbf{x}}_{\mathbf{x}}\ket{b\oplus c_k(\mathbf{x})}_f.
\end{equation}
Let the kernel be described as $R$ pairs of coordinate offsets $\eta_r\in\mathbb{Z}^d$ and integer coefficients $\alpha_r\in\mathbb{Z}$ as follows,
\begin{equation}
    \Lambda = \{(\eta_r,\alpha_r)\}_{1\leq r\leq R}\,.
\end{equation}
We define the corresponding integer-valued local feature by
\begin{equation}
    f_{\Lambda,k}(\mathbf{x})
    =
    \sum_{r=1}^R \alpha_r\, c_k(\mathbf{x}-\eta_r).
    \label{eq:def_conv_feature}
\end{equation}
To store this value reversibly, we use a signed output register.
If we define
\begin{equation}
    W_\Lambda = \sum_{r=1}^R |\alpha_r|,
\end{equation}
then $f_{\Lambda,k}(\mathbf{x})\in[-W_\Lambda,W_\Lambda]$, so it is enough to choose an $s$-qubit output register with
\begin{equation}
    2^{s-1} > W_\Lambda.
\end{equation}
This gives the following result.
\begin{proposition}
\label{prop:conv_eff}
Let $\Lambda$ be a fixed kernel with integer coefficients and a fixed threshold $\tau\in\mathbb{Z}$.
There exists a polynomial-size quantum circuit implementing the geometry oracle as in \Cref{prob:geomoracle}, for the geometry function
\begin{equation}
    g(x) = \mathbf{1}\left[(c_{k,M}*\Lambda)(x) \geq \tau\right]
\end{equation}
where $c_{k,M}(x)$ is a random binary field with exactly $M$ sites whose value is equal to 1 distributed quasi-uniformly, and $*$ is the convolution operator.
\end{proposition}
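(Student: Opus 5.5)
The plan is to build the oracle explicitly from the random-centre oracle $B^{(k,M)}$ of \Cref{subsec:rndcenters} and the primitives of \Cref{subsec:primitives}, and then count gates. The convolution unrolls as $(c_{k,M}*\Lambda)(\mathbf{x}) = f_{\Lambda,k}(\mathbf{x})=\sum_{r=1}^R \alpha_r c_k(\mathbf{x}-\eta_r)$ from \Cref{eq:def_conv_feature}. Because $\Lambda$ is fixed, the number of terms $R$, the offsets $\eta_r$ and the weights $\alpha_r$ are all constants independent of $n$. The circuit will therefore have a compute, threshold, uncompute structure, with a constant number of calls to polynomial-size subroutines.

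\textbf{Accumulation.} I introduce an $s$-qubit signed accumulator with $2^{s-1}>W_\Lambda$, a one-qubit flag $f$, and an output qubit $g$. For each $r=1,\dots,R$ I perform four steps in turn:
\begin{itemize}
\item shift the coordinate register with $A_{\mathbf{x}}^{(-\eta_r)}$;
\item apply $B^{(k,M)}_{\mathbf{x}\to f}$, which writes $c_k(\mathbf{x}-\eta_r)$ into $f$;
\item apply the controlled constant addition $A^{(\alpha_r)}_{f\to \mathrm{acc}}$, with $\alpha_r$ taken modulo $2^s$ in two's complement;
\item uncompute by applying $B^{(k,M)}_{\mathbf{x}\to f}$ again and then $A_{\mathbf{x}}^{(\eta_r)}$, which restores $f=0$ and the original $\mathbf{x}$.
\end{itemize}
The choice of $s$ guarantees that no wraparound occurs, so after all $R$ terms the accumulator holds $f_{\Lambda,k}(\mathbf{x})$ exactly in two's complement.

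\textbf{Threshold.} I compute $\mathbf{1}[f_{\Lambda,k}(\mathbf{x})\ge\tau]$ into $g$. If $\tau\le -W_\Lambda$ or $\tau>W_\Lambda$ the answer is constant and trivial. Otherwise I add the constant $2^{s-1}-\tau$ to the accumulator, enlarging $s$ by one bit if needed, which maps the signed comparison to an unsigned one. I then apply $L^{(2^{s-1})}_{\mathrm{acc}\to g}$ followed by an $X$ on $g$, since $\mathbf{1}[z\ge 2^{s-1}]$ is the complement of $\mathbf{1}[z<2^{s-1}]$. Equivalently, I could simply read off the sign bit after the shift. Finally I run the accumulation and shift steps in reverse to return every work register to $\ket{0}$, which yields $G\ket{\mathbf{x}}\ket{0}=\ket{\mathbf{x}}\ket{g(\mathbf{x})}$.

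\textbf{Cost.} Each call to $B^{(k,M)}$ costs $\mathrm{poly}(n)$, specifically $O(m\cdot n)$ gates for the $m\in\Theta(n)$ SPECK rounds plus $O(n)$ for the comparator; this holds even when $M$ is exponential, by \Cref{subsec:rndcenters}. Each coordinate shift is an $O(n)$ constant adder, and all arithmetic on the accumulator is $O(s)=O(\log W_\Lambda)$, which is constant. The total is $O(R\cdot\mathrm{poly}(n))=\mathrm{poly}(n)$.

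The only step needing care is the reversibility bookkeeping. Each flag must be uncomputed before it is reused, which costs two $B$ calls per term. The coordinate shifts must be undone exactly, which holds because arithmetic is modulo the grid size, so the convolution is periodic, consistent with \Cref{subsec:fixedshape}. Signed arithmetic must not overflow, which the bound on $s$ handles. The quasi-uniformity and exactly-$M$ claims are inherited directly from the bijectivity of $P^{(k)}$ and its pseudorandomness, as in \Cref{eq:def_random_centers}.
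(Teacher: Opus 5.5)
Your construction is correct and matches the paper's proof: for each kernel entry you shift the coordinates by $-\eta_r$, query $B^{(k,M)}$ into a flag, add $\alpha_r$ to a signed accumulator controlled on that flag, and uncompute, then threshold and uncompute the whole convolution, for a total cost of $O(R(G_B(n)+n+s))$. The differences are cosmetic: you shift $\mathbf{x}$ in place rather than through a copy register $\mathbf{y}$, and you spell out the signed-to-unsigned reduction for the comparison with $\tau$, which the paper only describes as a two's complement comparison (also, the paper's chosen comparator costs $O(n^2)$ controlled rotations rather than $O(n)$, which does not affect polynomiality).
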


The working principle is as follows.
For each kernel entry $(\eta_r,\alpha_r)$, we query the binary map at the shifted site $\mathbf{x}-\eta_r$, store the resulting bit in a flag qubit, add the weight $\alpha_r$ to the signed output register if and only if that flag is equal to $1$, and then uncompute the flag.
Because the kernel is fixed, the number of such weighted contributions is constant.
The details of the circuit implementation can be found in Appendix~\ref{app:formula_kernel}, and is illustrated in \Cref{fig:circ_conv}. 

The cost is linear in the number of non-zero elements of the kernel.
Because the kernel is fixed with bounded coefficients, the resulting circuit is polynomial in $n$.
The exact gate counts can be obtained by inserting the explicit cost of \Cref{subsec:rndcenters} into the above decomposition, and are derived in Appendix~\ref{app:costs_kernel}.

Interestingly the previous fixed-shape hole geometry oracle appears as a particular case of this more general convolution-and-threshold framework, with
\begin{equation}
    \Lambda = \{(\delta,1)\,:\,\delta\in\Delta\},
    \qquad
    \tau = 1.
\end{equation}
For this specific class of fixed-shape hole geometry, the construction proposed here is an alternative to the construction of \Cref{subsec:fixedshape}, where the queried bits do not need to be kept simultaneously, because each one is immediately converted into an integer contribution and then erased. 
\begin{figure}
    \centering
    \includegraphics[width=1\linewidth]{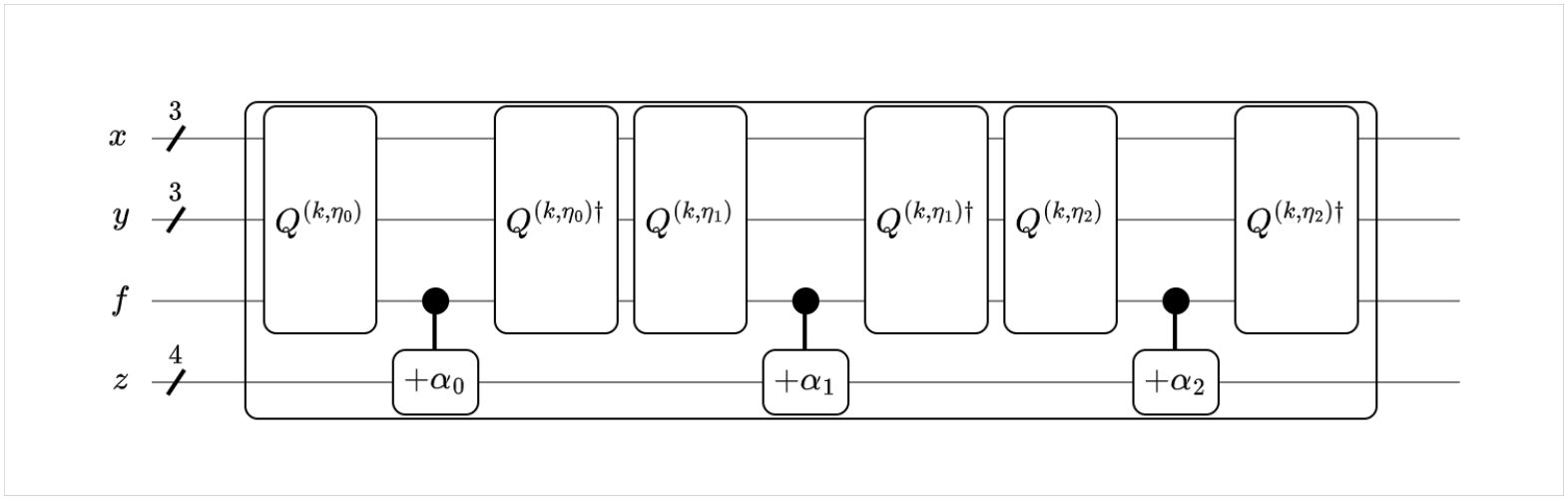}
    \caption{Quantum circuit to apply a convolution with the kernel $\Lambda = \{(\eta_r,\alpha_r)\}_{1\leq r\leq R}$ (here $R=3$) to the binary field given by $Q^{(k,0)}$, that is the permutation oracle shifted by 0, as in \Cref{fig:circ_centres}. For each shift $\eta_r$, the flag $f$ is raised or not. If it is raised, the controlled operation on $f$ adds $\alpha_r$ to the accumulator register $z$. The result of the convolution is stored in $z$, that can be further thresholded with a comparator to yield a geometry oracle.}
    \label{fig:circ_conv}
\end{figure}

\subsection{Examples of kernels and textures}
In this subsection we argue that the convolution of a random binary field with a kernel followed by thresholding as in \Cref{prop:conv_eff} is a powerful framework, able to model a broad range of interesting materials. 
We discuss this with two examples illustrated in \Cref{fig:textures}.

\paragraph{Gaussian convolution.} 
A simple way to make a random binary field look more realistic is to smooth it with a Gaussian kernel before thresholding, which is what is done in~\cite{hyman2014stochastic-b66}. 
The Gaussian introduces a spatial correlation length, so nearby pixels are no longer independent but inherit similar values from their neighbourhoods. 
After thresholding, this produces connected regions with smoother interfaces and a characteristic feature size, rather than pixel-scale salt-and-pepper noise. 
For this reason, the resulting patterns are often close to heterogeneous microstructures such as porous media, closed-cell foams, and mineral textures in rock: mathematically, one is taking level sets of a correlated random field rather than of independent noise~\cite{roberts2001elastic-145,teichmann2021modeling-2e8}.

\paragraph{Activation--inhibition kernel.}
A second useful example is a two-scale activation--inhibition kernel. 
Consider a grid and around each site, take a small neighbourhood and a larger surrounding neighbourhood disjoint from the first one.
The small neighbourhood is given positive weight, while the larger one is given negative weight. 
For a real-valued field $f^n_{i,j}\in\mathbb{R}$, write $s^n_{i,j}$ for the local average on the small neighbourhood and $\ell^n_{i,j}$ for the local
average on the larger neighbourhood. 
For positive real numbers $a,b,c>0$, the update rule is
\begin{equation}
\label{eq:updaterule}
    f_{i,j}^{n+1}=a\,s_{i,j}^n-b\,\ell_{i,j}^n+c\,f_{i,j}^n\,.
\end{equation}
It forms a linear kernel, and the kernel corresponding to the repetition of the update rule $m$ times corresponds to the $m$-fold convolution of the kernel.
Finally a threshold $\tau$ is applied to the field $f$ to get a binary field $g=f>\tau$ as in \Cref{prob:geomoracle}.
The positive short-range term in \Cref{eq:updaterule} should be read as local diffusion or local activation, field values are encouraged to have the same sign as the values in their smaller neighbourhood.
The negative longer-range term should be read as inhibition, depletion, or a reaction term, field values are encouraged to have the opposite sign as the values in their larger neighbourhood.
This is a discrete analogue of the local-activation/long-range-inhibition mechanism behind many reaction--diffusion and Turing-type pattern models.
The resulting textures are therefore analogous, at a coarse morphological level, to reaction–diffusion patterns in biological systems~\cite{kondo2010reaction-diffusion-fd3}, Liesegang-type precipitation patterns in gels~\cite{nabika2020pattern-0f6}, short-range attraction and long-range repulsion (SALR) colloidal structures~\cite{liu2019colloidal-7ba}, block-copolymer microphase-separation morphologies ~\cite{ohta1986equilibrium-5d6}, and glassy or gel states in liquids with competing interactions~\cite{carretas-talamante2023non-equilibrium-1d5}.

We give an example of such a texture in \Cref{fig:textures}, we choose $a=1.2$, $b=1.0$, $c=0.2$, and $\tau=0.5$.
In practice, to make the logic reversible we transform these real coefficients into integer valued kernels.
The small neighbourhood is the $3\times3$ outer ring, while the larger neighbourhood is the $5\times5$ outer ring, and repeat it $m=10$ times. 

We emphasize that it is a simple discrete surrogate rather than an exact finite-difference scheme for a specific PDE.
In fact, we have shown earlier that generating a geometry from a PDE solution is too hard in general in \Cref{thm:equadiffmat_hard}.

\begin{figure}
    \centering
    \includegraphics[width=\linewidth]{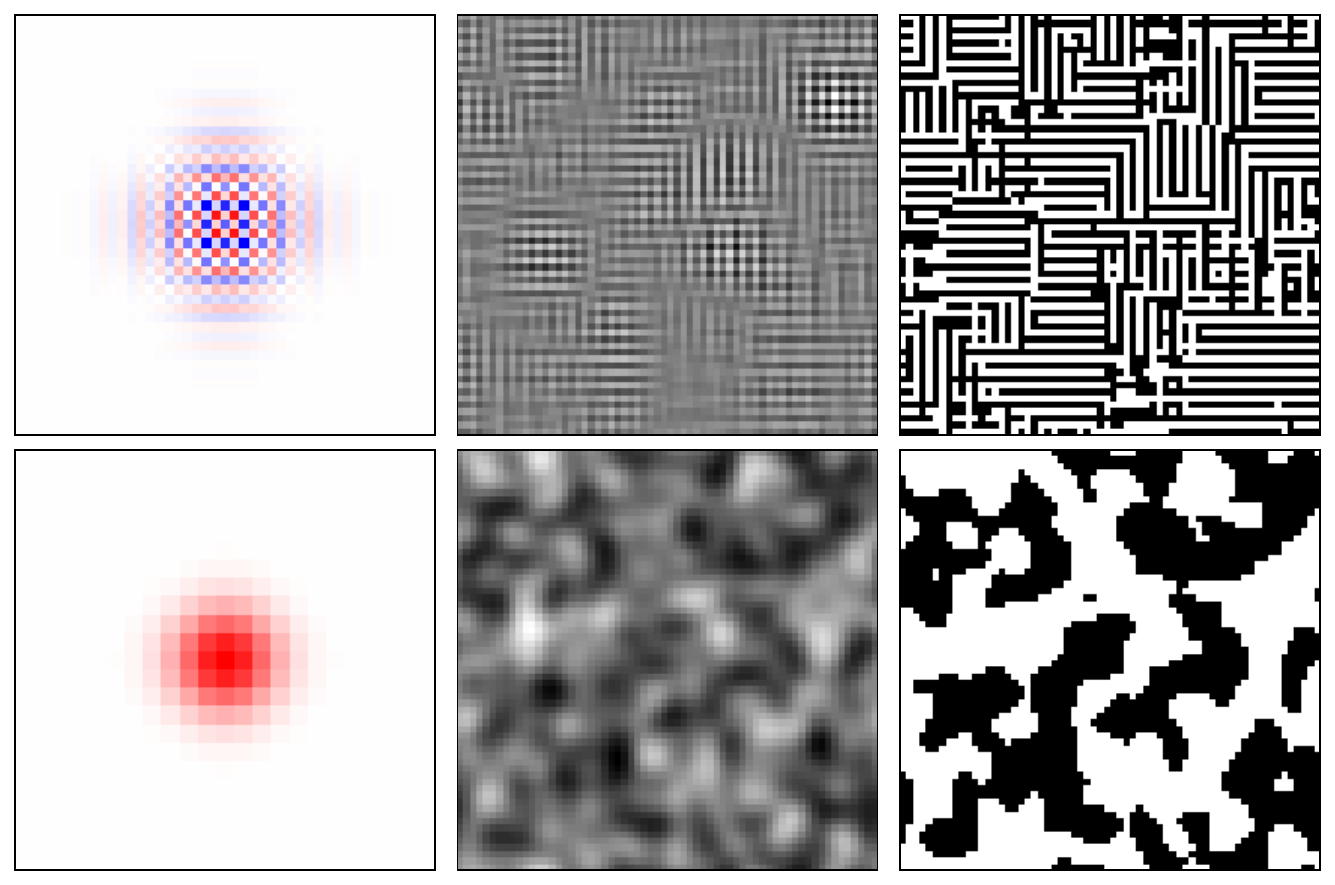}
    \caption{Different textures obtained from geometries covered by \Cref{prop:conv_eff}. The \textbf{left} column shows the kernel used, where positive values are red, negative values are blue and zeros are white. The \textbf{middle} column shows the greyscale image resulting from the convolution of the kernel with a binary field where exactly half of the pixels are activated at random (same initial field in both images). The \textbf{right} column shows the binarized greyscale image, which is the geometry function $g$ as in \Cref{prob:geomoracle}. In the \textbf{top} row, the kernel yields two-scale activation--inhibition textures analogous to reaction--diffusion, precipitation, and competing-interaction patterns~\cite{kondo2010reaction-diffusion-fd3,nabika2020pattern-0f6,liu2019colloidal-7ba,carretas-talamante2023non-equilibrium-1d5}. In the \textbf{bottom} row we use a Gaussian smoothing kernel, encountered for example in porous materials, ceramics, minerals or foams~\cite{hyman2014stochastic-b66,roberts2001elastic-145,roberts2000elastic-50a,teichmann2021modeling-2e8}.}
    \label{fig:textures}
\end{figure}

\section{Extensions}
\label{sec:extension}
The constructions of \Cref{sec:circuits} cover a broad family of structured random materials.
In this section, we mention a few direct extensions of the framework.

\subsection{Several shapes for holes}
In \Cref{sec:circuits} we considered sampling for only one fixed shape.
But if, for example, the goal is to model a material with circular holes of different radii, which is the model chosen in~\cite{carolis2024effect-186,roberts2000elastic-50a}, we need to be able to sample different shapes.

There is a straightforward way of doing that with the current implementation.
For each shape $s$, let $\Delta_s$ be the corresponding offset set, with cardinality $T_s$, and let
$$
    H^{(s)}\ket{x}\ket{0}=\ket{x}\ket{h_s(x)}
$$
be the corresponding fixed-shape hole oracle.
Each shape can be given its own pseudorandom key, so that the different shape ensembles are sampled independently.
One then combines the outputs with a reversible OR, yielding
$$
    G\ket{x}\ket{0}
    =
    \ket{x}\ket{\bigvee_s h_s(x)}.
$$
The complexity scales linearly with the sum of the sizes of the shape descriptions, namely $\sum_s T_s$.

We considered a subtractive manufacturing convention, where the domain is filled by material by default, and the geometry function describes where material is removed, as for example in cheese.
The reverse can also be considered, in an additive manufacturing convention, where the domain is empty by default and fixed-shape obstacles are added.
The comparison between the two cases is made for the modelling of ceramics in~\cite{roberts2000elastic-50a}.
More generally, one may assign different output labels to different shape families, which yields a straightforward path from mono-material to multi-material oracles.

\subsection{Periodic materials and random perturbations}
Periodic materials are not complicated to encode, but we mention them for completeness as they form the basis of an important family of procedural materials.
Consider a material with a periodic structure, that is, a repeated pattern of constant size $p_x\times p_y\in O(1)$, described by a fixed binary array
$$
    P:\{0,\dots,p_x-1\}\times \{0,\dots,p_y-1\}\to \{0,1\}.
$$
Then the geometry function, as defined in \Cref{prob:geomoracle}, on the large grid $N_x\times N_y$ is simply
$$
    g(x,y)=P(x \bmod p_x,\; y \bmod p_y).
$$
To build the oracle, one first computes the modulo of the coordinates and gets the residuals $(r_x,r_y)$, then looks up the constant-sized table $P(r_x,r_y)$, and finally uncomputes the residues to leave the ancilla free.
Details can be found in Appendix~\ref{app:periodic}.
Because the period is constant, this circuit remains polynomially-sized.

We can combine this idea with the pseudorandom constructions of \Cref{sec:circuits} to implement materials that are near-periodic, that is they have a base periodic pattern that is randomly perturbed as in~\cite{klatt2020cloaking-6dd,}.
For example, one can implement $M$ random bit flips on the periodic pattern, as in~\cite{rengarajan2005effect-81a}, by creating a pseudorandom binary field and XORing it with the periodic pattern.
One may also introduce small random shifts of the pattern centres, by creating a pseudorandom offset field and adding that offset to the coordinate register before the periodic lookup.
In both cases, the perturbation remains efficient as long as it is described by a set of fixed local rules.

\subsection{Statistics over material distributions}
\label{subsec:statistics}
In the constructions of \Cref{sec:circuits}, the key of the pseudorandom function is applied from a classical register, as a wall of $X$ gates, that is, an $X$ gate is applied to the $i$-th qubit if the $i$-th bit of the key is equal to $1$.
However, we can turn the key register into a quantum register, and use it to coherently control the same X gates.

Concretely, let $K$ be a key register, and let $P^{(K)}$ denote the corresponding coherent pseudorandom permutation, where each fixed XOR $X^{(k_j)}$ in the round function is replaced by a controlled XOR from the appropriate part of the key register.
Then the random-centre oracle becomes
$$
    B\ket{k}_K\ket{x}\ket{0}
    =
    \ket{k}_K\ket{x}\ket{c_k(x)},
$$
where $c_k$ is the binary map associated with the key $k$.
If the key register is initialized in a superposition
$$
    \sum_k \alpha_k \ket{k}_K,
$$
then the output is
$$
    \sum_k \alpha_k \ket{k}_K\ket{x}\ket{c_k(x)},
$$
and similarly for the fixed-shape hole and convolution-based geometry oracles.
In other words, one gets coherent access to a superposition of material realizations rather than to a single one.

A particularly natural choice is the uniform superposition over the key register, which gives access to the corresponding uniform ensemble of pseudorandom materials.
This may be useful when one is interested not only in one realization of the material, but in statistics over a whole distribution of random geometries.

\acknowledgments
The author warmly thanks Matteo Lostaglio, Paul Mannix and Kamil Korzweka for their precious feedback and PsiQuantum's Construct software team for their support.

\printbibliography

\onecolumn
\clearpage
\appendix

\section{Detailed circuit costs}
\label{app:costs}
\subsection{Primitives}
\label{app:costs_prim}
All costs are given for $q$ qubits.
\begin{itemize}
    \item \textbf{Copy.} $q$ CNOT gates.
    \item \textbf{Quantum-quantum addition.} $1$ ancilla qubit, $2q-3$ Toffoli gates, and $5q-7$ CNOT gates.
    \item \textbf{Classical-quantum addition.} which uses $3$ ancilla qubits and $4q+O(1)$ Toffoli gates
    \item  \textbf{Comparison.} Besides the output flag qubit, it requires no additional work register, and its cost is $4$ Quantum Fourier Transform (QFT) blocks containing in total $2q^2$ controlled-rotation gates.
    \item \textbf{Bitwise XOR.} $q$ $X$ gates, which we do not include in the gate counts.
    \item \textbf{Cyclic bit rotations.} free, just re-indexing of the wires.
\end{itemize}

\subsection{Pseudorandom permutation}
\label{app:costs_perm}
The cost of one round follows directly from that of the primitives in \Cref{subsec:primitives}.
The only non-trivial arithmetic block is the in-place adder on $n/2$ bits, together with one layer of $n/2$ CNOT gates for the final XOR from $l$ onto $r$.
Ignoring lower-order constants, one round therefore uses
\begin{center}
\begin{tabular}{ |c|c|c| } 
 \hline
 ancilla & CNOT & Toffoli \\ 
 \hline
 $1$ & $3n$ & $n$ \\ 
 \hline
\end{tabular}
\end{center}
The rotations are free in our cost model, and we do not count the single-qubit $X$ gates coming from the fixed round key.

The overall cost of the $m$-round permutation is therefore linear in $m$.
Again ignoring lower-order constants, we obtain
\begin{center}
\begin{tabular}{ |c|c|c| } 
 \hline
 ancilla & CNOT & Toffoli \\ 
 \hline
 $1$ & $3mn$ & $mn$ \\ 
 \hline
\end{tabular}
\end{center}

\subsection{Random-centre oracle}
\label{app:costs_centres}
The cost follows directly from the primitives of \Cref{subsec:primitives} and from the pseudorandom permutation cost of \Cref{subsec:prp}.
There are two copies of an $n$-bit register, one forward permutation and one inverse permutation, one comparator and one inverse comparator, and one final CNOT to store the output bit.
Ignoring lower-order constants, the overall cost is therefore
\begin{center}
\begin{tabular}{ |c|c|c|c| } 
 \hline
 ancilla & CNOT & CR & Toffoli \\ 
 \hline
 $n+2$ & $6mn+2n$ & $4n^2$ & $2mn$ \\ 
 \hline
\end{tabular}
\end{center}
Here the $n$-qubit work register is used for the copied coordinate, one qubit is the temporary comparison flag, and one qubit is the output register.

\subsection{Fixed-shape holes}
\label{subsec:costs_fixedshape}
If the random-centre oracle $B$ uses $A_B(n)$ ancillas and $G_B(n)$ gates, then the fixed-shape hole oracle uses
\begin{equation}
    O(A_B(n)+n+T)
\end{equation}
ancillas and
\begin{equation}
    O(T(G_B(n)+n))
\end{equation}
gates.
In particular, since $T$ is fixed, it has the same asymptotic complexity as the random-centre oracle itself.

Using the explicit cost of \Cref{subsec:rndcenters}, and ignoring lower-order constants as well as the $O(T)$ overhead of the final OR logic, we obtain
\begin{center}
\begin{tabular}{ |c|c|c|c| } 
 \hline
 ancilla & CNOT & CR & Toffoli \\ 
 \hline
 $n+T+O(1)$ & $12Tmn+4Tn$ & $8Tn^2$ & $4Tn(m+4)$ \\ 
 \hline
\end{tabular}
\end{center}

\subsection{Local convolution}
\label{app:costs_kernel}
The cost is linear in the kernel size $R$.
Each block contains one forward and one backward query to the binary map oracle, two coordinate shifts, two copies, and one controlled constant addition onto the $s$-qubit output register.
If one application of the binary oracle $B$ has gate cost $G_B(n)$ and ancilla cost $A_B(n)$, then the convolution oracle has gate cost
\begin{equation}
    O\!\left(
    \sum_{r=1}^R
    \bigl(
    G_B(n)+n+s
    \bigr)
    \right)
    =
    O\!\left(R(G_B(n)+n+s)\right),
\end{equation}
and ancilla cost
\begin{equation}
    O\!\left(A_B(n)+n\right),
\end{equation}
excluding the $s$-qubit output register.
The thresholded binary oracle has the same scaling, up to one additional comparison on the $s$-qubit output register and one final uncomputation of the convolution.

\section{Detailed circuits descriptions}
\label{app:formula}
\subsection{Random-centre oracle}
\label{app:formula_centers}
More explicitly, let $y$ be an $n$-qubit work register initialised to zero, let $f$ be a one-qubit temporary flag register, and let $g$ be the output qubit.
We define
\begin{align}
    &B_{x,y,f,g}^{(k,M)}
    \ket{z}_x\ket{0}_y\ket{0}_f\ket{0}_g
    =
    \ket{z}_x\ket{0}_y\ket{0}_f\ket{c_k(z)}_g,
    \\
    &B_{x,y,f,g}^{(k,M)}
    =
    C_{x\to y}\,
    \bigl(P_y^{(k)}\bigr)^\dagger\,
    \bigl(L_{y\to f}^{(M)}\bigr)^\dagger\,
    C_{f\to g}\,
    L_{y\to f}^{(M)}\,
    P_y^{(k)}\,
    C_{x\to y}.
    \label{eq:formula_random_centers}
\end{align}
Reading from right to left, the first copy stores the input coordinate into the work register $y$.
The permutation $P_y^{(k)}$ is then applied to $y$.
The comparator $L_{y\to f}^{(M)}$ raises the temporary flag if and only if the permuted value is less than $M$.
A single CNOT stores this bit into the output qubit $g$.
Finally, the comparator, permutation, and copy are uncomputed in reverse order, leaving all ancilla in the zero state.

\subsection{Fixed-shape holes}
\label{app:formula_fixedshape}
To make the construction explicit, we use the random-centre oracle of \Cref{subsec:rndcenters} in XOR form,
\begin{equation}
    B_{\mathbf{x}\to g}^{(k,M)}
    \ket{\mathbf{z}}_{\mathbf{x}}\ket{b}_g
    =
    \ket{\mathbf{z}}_{\mathbf{x}}\ket{b\oplus c_k(\mathbf{z})}_g.
\end{equation}
Let $\mathbf{y}$ be a copy register for the coordinates, let $f_1,\dots,f_T$ be one-qubit flag registers, and let $g$ be the output qubit.
For each offset $\delta_t$, we define the shifted query block
\begin{equation}
    Q_{\mathbf{x},\mathbf{y}\to f_t}^{(k,M,\delta_t)}
    =
    C_{\mathbf{x}\to \mathbf{y}}\,
    A_{\mathbf{y}}^{(\delta_t)}\,
    B_{\mathbf{y}\to f_t}^{(k,M)}\,
    A_{\mathbf{y}}^{(-\delta_t)}\,
    C_{\mathbf{x}\to \mathbf{y}}.
    \label{eq:one_shape_query}
\end{equation}
Reading from right to left, one first copies the coordinate onto $\mathbf{y}$, then shifts by $-\delta_t$, queries the random-center oracle on the shifted site, and finally undoes the shift and the copy.
Hence
\begin{equation}
    Q_{\mathbf{x},\mathbf{y}\to f_t}^{(k,M,\delta_t)}
    \ket{\mathbf{z}}_{\mathbf{x}}\ket{0}_{\mathbf{y}}\ket{0}_{f_t}
    =
    \ket{\mathbf{z}}_{\mathbf{x}}\ket{0}_{\mathbf{y}}\ket{c_k(\mathbf{z}-\delta_t)}_{f_t}.
\end{equation}

We then combine these $T$ bits with a reversible OR.
Writing
\begin{equation}
    O_{f_1,\dots,f_T\to g}
    \ket{b_1,\dots,b_T}\ket{0}_g
    =
    \ket{b_1,\dots,b_T}\ket{\bigvee_{t=1}^T b_t}_g,
\end{equation}
the full oracle is
\begin{align}
    &H_{\mathbf{x},\mathbf{y},\mathbf{f},g}^{(k,M,\Delta)}
    \ket{\mathbf{z}}_{\mathbf{x}}\ket{0}_{\mathbf{y}}\ket{0}_{\mathbf{f}}\ket{0}_g
    =
    \ket{\mathbf{z}}_{\mathbf{x}}\ket{0}_{\mathbf{y}}\ket{0}_{\mathbf{f}}\ket{h_{\Delta,k}(\mathbf{z})}_g,
    \\
    &H_{\mathbf{x},\mathbf{y},\mathbf{f},g}^{(k,M,\Delta)}
    =
    \left(
    \prod_{t=1}^T
    Q_{\mathbf{x},\mathbf{y}\to f_t}^{(k,M,\delta_t)}
    \right)^\dagger
    O_{f_1,\dots,f_T\to g}
    \left(
    \prod_{t=1}^T
    Q_{\mathbf{x},\mathbf{y}\to f_t}^{(k,M,\delta_t)}
    \right),
    \label{eq:formula_shape_oracle}
\end{align}
where $\mathbf{f}=(f_1,\dots,f_T)$.
Because each shifted query block leaves the copy register clean, the only information retained before the OR step is the list of flag bits.
The final OR can be compiled using $X$ gates and one multi-controlled CNOT.

\subsection{Local convolution}
\label{app:formula_kernel}

More explicitly, let $\mathbf{y}$ be a copy register for the coordinates, let $f$ be a one-qubit flag register, and let $z$ be the $s$-qubit signed output register.
For a fixed kernel entry $(\eta_r,\alpha_r)$, we define the block
\begin{equation}
    K_{\mathbf{x},\mathbf{y},f,z}^{(k,M,\eta_r,\alpha_r)}
    =
    C_{\mathbf{x}\to \mathbf{y}}\,
    A_{\mathbf{y}}^{(\eta_r)}\,
    B_{\mathbf{y}\to f}^{(k,M)}\,
    A_{f\to z}^{(\alpha_r)}\,
    B_{\mathbf{y}\to f}^{(k,M)}\,
    A_{\mathbf{y}}^{(-\eta_r)}\,
    C_{\mathbf{x}\to \mathbf{y}}.
    \label{eq:one_kernel_block}
\end{equation}
Reading from right to left, one first copies the coordinate into $\mathbf{y}$, then shifts by $-\eta_r$, queries the binary oracle at the shifted site, conditionally adds $\alpha_r$ to the output register, and finally undoes the query, shift and copy.
Therefore
\begin{align}
    &K_{\mathbf{x},\mathbf{y},f,z}^{(k,M,\eta_r,\alpha_r)}
    \ket{\mathbf{z}}_{\mathbf{x}}\ket{0}_{\mathbf{y}}\ket{0}_f\ket{\ell}_z
    \nonumber\\
    &\qquad =
    \ket{\mathbf{z}}_{\mathbf{x}}\ket{0}_{\mathbf{y}}\ket{0}_f
    \ket{\ell+\alpha_r c_k(\mathbf{z}-\eta_r)}_z.
\end{align}

The full convolution oracle is then obtained by composing all kernel-entry blocks in series:
\begin{align}
    &F_{\mathbf{x},\mathbf{y},f,z}^{(k,M,\Lambda)}
    \ket{\mathbf{z}}_{\mathbf{x}}\ket{0}_{\mathbf{y}}\ket{0}_f\ket{0}_z
    =
    \ket{\mathbf{z}}_{\mathbf{x}}\ket{0}_{\mathbf{y}}\ket{0}_f\ket{f_{\Lambda,k}(\mathbf{z})}_z,
    \\
    &F_{\mathbf{x},\mathbf{y},f,z}^{(k,M,\Lambda)}
    =
    \prod_{r=1}^R
    K_{\mathbf{x},\mathbf{y},f,z}^{(k,M,\eta_r,\alpha_r)}.
    \label{eq:formula_fullconv}
\end{align}
Because each block leaves the copy register and the flag register clean, the same work registers can be reused for all entries of the kernel.

To get the binary geometry oracle rather than the integer-valued feature itself, one simply compares the output register to a fixed threshold and then uncomputes the convolution.
Writing
\begin{equation}
    L_{z\to g}^{(\tau)}
    \ket{\ell}_z\ket{0}_g
    =
    \ket{\ell}_z\ket{\ell\geq \tau}_g,
\end{equation}
with the comparison interpreted on the signed two's complement encoding, the thresholded oracle is
\begin{align}
    &G_{\mathbf{x},\mathbf{y},f,z,g}^{(k,M,\Lambda,\tau)}
    \ket{\mathbf{z}}_{\mathbf{x}}\ket{0}_{\mathbf{y}}\ket{0}_f\ket{0}_z\ket{0}_g
    =
    \ket{\mathbf{z}}_{\mathbf{x}}\ket{0}_{\mathbf{y}}\ket{0}_f\ket{0}_z
    \ket{\mathbf{1}[f_{\Lambda,k}(\mathbf{z})\geq \tau]}_g,
    \\
    &G_{\mathbf{x},\mathbf{y},f,z,g}^{(k,M,\Lambda,\tau)}
    =
    \bigl(F_{\mathbf{x},\mathbf{y},f,z}^{(k,M,\Lambda)}\bigr)^\dagger\,
    L_{z\to g}^{(\tau)}\,
    F_{\mathbf{x},\mathbf{y},f,z}^{(k,M,\Lambda)}.
    \label{eq:formula_thresholded_conv}
\end{align}
All work registers are therefore returned to zero, and only the output bit is retained.

\section{Randomness of SPECK}
\label{app:randomness}
We propose the standard differential metric to assess whether a function is random enough. 
It is computed as follows: for a fixed key $k$, sample $J$ input $2n$-bit-string at random, which we call $\{x_j\}$, and compute the corresponding output $2n$-bit-string $y_j = f(x_j)$.
For each input string $x_j$, flip each of the $n$ bits, for the $i$-th bit this corresponds to an XOR with $i$-th one hot encoding $e_i$.
For each bit, compute the output $y_{j,i} = f(x_j \oplus  e_i)$, and XOR this output with the original output $y_j$.
$$D_{i,j} = f(x_j \oplus  e_i) \oplus f(x_j)$$
This gives a $n$-bit offset $D_{i,j}$ that should for each bit $i$ be as spread-out as possible over the samples $j$, that is $ (D_{i,j})_{x_j \sim U(n)} \sim p_i$ should be as close to the uniform distribution as possible. 

To understand why this should be the case, consider a situation where this is not satisfied: flipping the least significant bit of the input always yields a flip of the most significant bit of the output.
In that case $p_0$ is the Dirac delta distribution centred at $e_{n-1}$.
Consider two neighbouring lattice sites, that differ only by a flip of the least significant bit, their outputs of the random function will be the same but with only the most significant bit flipped.
Therefore the output of two neighbouring lattice sites will consistently be separated by $2^{n/2}$ in their integer representation.
If we choose a number of holes $M\sim 2^{n/2}$, after the comparison, out of two neighbouring sites only one will be flagged.
This is an extremely strong spatial correlation, far from the ideal uniform distribution.

To characterize how random a function is we have an approximation of $p_i$ as a histogram (thanks to the $J$ samples), and we collect the probability of the most frequent string, which we average over the $n$ bits, yielding the metric:
$$\mathcal{R} = \text{mean}_i \Big( \max_{d} (\text{counts}_j(D_{i,j} = d)/J ) \Big) $$
We show in \Cref{fig:crypto} a comparison between $m$ rounds implemented on a quantum computer, and the ideal baseline reference, which consists in outputting a random bit-string of the correct length. 
We see that overall, a number of rounds growing linearly with the number of bits $m\in\Theta(n)$ gives quality randomness.
\begin{figure}
    \centering
    \includegraphics[width=0.5\linewidth]{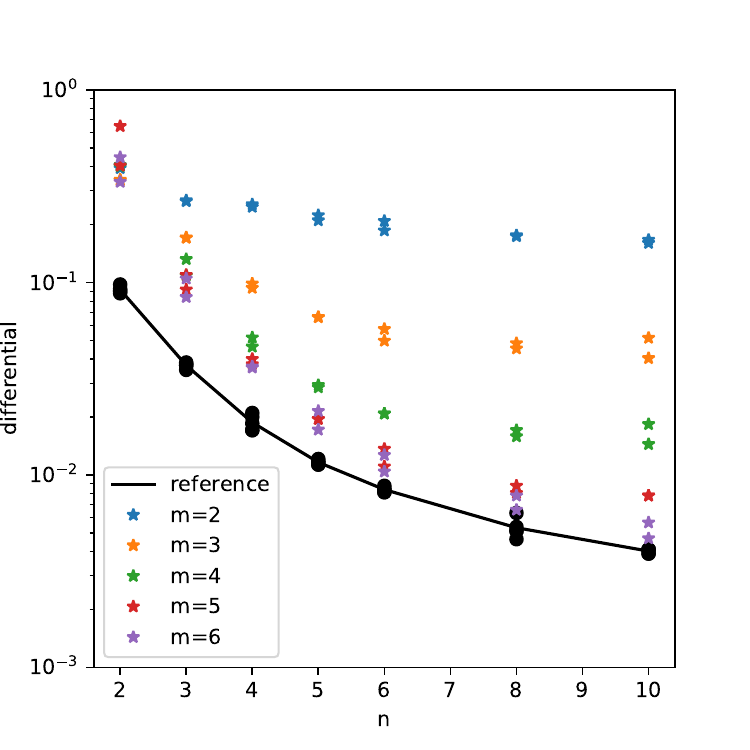}
    \caption{Differential metric $\mathcal{R}$ for different number input size $2n$ and number of rounds $m$. Because the benchmark is inherently random as it samples random bit-string, we show the average as a continuous line, and individual benchmark instance results as points to get a sense of the distribution.}
    \label{fig:crypto}
\end{figure}

\section{Periodic material details}
\label{app:periodic}
There are three registers, the input register holding the coordinates, in our example in 2D we call them $x,y$, the output qubit $o$, and the residual register of size $\lceil \log_2 p_x\rceil \times\lceil \log_2 p_y\rceil$ initialised and returned in $\ket{0}$.

For the first step, it can be constructed using a sequence of controlled modulo additions.
Write $x=\sum_{j=0}^{m_x-1} x_j 2^j$. Then
$$
    x \bmod p_x = \sum_{j=0}^{m_x-1} x_j (2^j \bmod p_x) \pmod{p_x}.
$$
We build a reversible circuit that starts with $\ket{0}$ and, for each bit $x_j$, conditionally adds the constant $(2^j \bmod p_x)$ modulo $p_x$ into $r_x$.
Each controlled add is on a constant sized register, so each costs $O(1)$ gates, and there are $n_x$ of them.
We do the same for $y$ to get $r_y=y \bmod p_y$ in $O(m_y)$ gates, and if there are $d$ dimensions to all the other dimensions.
We have
$$
    \ket{x,y,0,r_x = x \bmod p_x,r_y = y \bmod p_y}.
$$
Since $r_x\in\{0,\dots,p_x-1\}$ and $r_y\in\{0,\dots,p_y-1\}$, the lookup table has only $p_xp_y=O(1)$ entries.
Therefore the compilation of the pattern oracle is efficient.
$$
    U_P:\ket{r_x,r_y,z}\mapsto \ket{r_x,r_y,z\oplus P(r_x,r_y)}.
$$
we get
$$
    \ket{x,y,0,r_x,r_y} \mapsto \ket{x,y,b(x,y),r_x,r_y}.
$$
Discarding the clean ancillas gives the geometry oracle specified as in \Cref{prob:geomoracle} with a cost of $O(n_x+n_y)$ gates.

Note there is a special easy case when $p_x,p_y$ are powers of 2.
If $p_x=2^{k_x}$ and $p_y=2^{k_y}$, then the modulo operation reduces to the $k_x$ and $k_y$ lowest bits of $x$ and $y$ respectively.

\section{Provably hard oracle}
\subsection{Introduction}
\label{app:proof_th1}
In this appendix, we show that sparsity and spectral norm of a matrix are not sufficient to guarantee its compilation as a polynomial-size quantum circuit.
Consider a diagonal matrix with $N=2^n$ with diagonal elements $\in [0,1]$, each specified to $m$ bits of precision.
If they are chosen uniformly at random, that is $D$ has no structure, then in general an oracle implementing it requires a circuit size $\in 2^{\Omega(n)}$ by a counting argument.
There are $2^{mN}$ such diagonals but only $\exp(\mathrm{poly}(n))$ polynomial-size quantum circuits. 
So even though this matrix is $1$-sparse and satisfies $\|D\|\le 1$, it generically does not admit a polynomial-size oracle implementation. 

One might object that the above diagonal requires an exponentially large explicit description of the matrix. 
To that effect we define the class of matrix that has a polynomial description and are 1-sparse and with a unit spectral norm.
The following definition is intentionally weak, it is introduced precisely to show that a short description string, together with sparsity and norm bounds, is not enough to guarantee efficient quantum compilation.
\begin{definition}
    For a natural number $n$, consider the set of matrices $A\in \mathbb{C}^{2^n\times2^n}$ that have the following properties: 
    \begin{itemize}
        \item They can be described by a poly($n$) string
        \item They are 1-sparse
        \item They have a spectral norm $\|A\| \leq 1$.
    \end{itemize}
    \label{def:tamed}
\end{definition}
We prove that having access to any oracle satisfying \Cref{def:tamed} would solve up to approximate $\#\mathrm{SAT}$-hard problems.
The high-level logic goes as follows.

\subsection{Formal statement and proof}
Let $\varphi$ be a $n$ bit CNF (conjunctive normal form), which forms the polynomial-size description of the matrix. 
Let the set of the n-bit-strings that satisfy this CNF be converted into floats on $[0,1)$ with the fractional encoding.
Order this set into a $K$-dimensional vector $s = (s_k)_{1 \leq k \leq K}$.  
If there is no solution $K=0$ return the $2^n$-dimensional zero matrix. 
Otherwise, define the  $2^n$-dimensional diagonal matrix such that this list $s$ is cycled through, that is $d_x = s_{x \mod K}$. 
This matrix belongs to the \Cref{def:tamed} matrix class. 
Now suppose we are given a time evolution oracle, then applying the quantum phase estimation on the uniform superposition state allows to sample solutions quasi-uniformly. 
Therefore having access to such oracles would solve approximations of \#P problems. 
Under standard complexity theoretical assumption, that means there are some well behaved matrices (in terms of sparsity and spectral norm) whose oracles simply cannot be compiled efficiently as quantum circuit.

\Cref{def:tamed} only requires the existence of a description string of length $\text{poly}(n)$, it does not require that the semantics of that description be efficiently decodable.  
To make this weakness explicit, we now fix a particular description language.
A CNF formula $\varphi$ on $n$ Boolean variables will itself serve as the description string of a matrix $D_\varphi$ defined below.

For any bit-string $y\in\{0,1\}^n$, 
let $\mathrm{enc}(y)\in(0,1)$ be the fraction represented by $y$,
\begin{equation}
    \mathrm{enc}(y) = \sum_{j=1}^n \frac{y_j}{2^j} + \frac{1}{2^{n+1}}
\end{equation}

Let the $K$ satisfying assignments of $\varphi$, listed in lexicographic order be
$$\mathrm{Sat}(\varphi)=\{y^{(1)},\dots,y^{(K)}\}\subseteq\{0,1\}^n$$
and $N := 2^{2n}$.

We define the diagonal matrix $D_\varphi\in\mathbb C^{N\times N}$ by
$$D_\varphi :=
\begin{cases}
0_N, & K=0,\\[1ex]
\mathrm{diag}(d_0,\dots,d_{N-1}),\qquad
d_x := \mathrm{enc}\!\bigl(y^{(1+(x \bmod K))}\bigr), & K>0.
\end{cases}$$

Since $D_\varphi$ is diagonal, it is $1$-sparse, and since every diagonal entry lies in $[0,1)$, we have $\|D_\varphi\|\le 1$.  
The description length is $|\varphi|=\text{poly}(n)=\text{poly}(\log N)$, therefore it satisfies \Cref{def:tamed}.

\begin{theorem}[Compiling arbitrary \Cref{def:tamed} is hard]
    \label{thm:tamed-hard-formal}
    Consider the description language above, in which a CNF formula $\varphi$ describes the matrix $D_\varphi$.
    Suppose there exists a uniform classical algorithm $\mathsf{Comp}$ running in time  $\text{poly}(|\varphi|)$ such that, on input $\varphi$, it outputs a polynomial-size quantum circuit implementing either
    \begin{enumerate}
        \item the exact value oracle
        $$O_{D_\varphi}:\ |x\rangle|0^{n+1}\rangle
        \longmapsto
        |x\rangle\bigl|\mathrm{bin}(2^{n+1} d_x)\bigr\rangle,$$
        or
        \item the exact time-evolution oracle
        $$U_\varphi := e^{2\pi i D_\varphi}$$.
    \end{enumerate}
    Then there is a polynomial-time quantum algorithm that, on input $\varphi$,
    \begin{itemize}
        \item outputs a special symbol $\bot$ if $\varphi$ is unsatisfiable, and
        \item otherwise samples a satisfying assignment from a distribution
        $\mu_\varphi$ satisfying
        $$\|\mu_\varphi - \mathrm{Unif}(\mathrm{Sat}(\varphi))\|_{\mathrm{TV}}\le 2^{-n-2}.$$
    \end{itemize}
    In particular, by the Jerrum--Valiant--Vazirani reduction from almost-uniform generation to approximate counting for self-reducible $\mathrm{NP}$ relations, the existence of such a compiler implies a polynomial-time randomized approximation scheme for $\#\mathrm{SAT}$ (with quantum computation allowed in the sampler subroutine).
\end{theorem}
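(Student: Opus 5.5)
The plan is to reduce everything to one primitive: given a basis state $\ket{x}$ with $x\in\{0,\dots,N-1\}$, coherently recover the string $y^{(1+(x\bmod K))}$. Then apply this primitive to a uniformly random $x$. The encoding $\mathrm{enc}$ was chosen so that $2^{n+1}d_x = 2\,\mathrm{int}(y)+1$ is an odd integer in $[1,2^{n+1})$. So $d_x$ is exactly representable with $n+1$ bits, its first $n$ bits are $y$, and its last bit is always $1$. This last bit separates the satisfiable case from the all-zero matrix $0_N$ returned when $K=0$.

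\emph{Case 1 (value oracle).} Run $\mathsf{Comp}(\varphi)$ to get the circuit for $O_{D_\varphi}$. Prepare $H^{\otimes 2n}\ket{0}$ on the $x$ register, which is just sampling $x$ uniformly, and apply $O_{D_\varphi}$. Measure the output register to obtain an integer $w$, and read off $y$ as $w$ shifted right by one bit. If $w=0$, output $\bot$. As a safety check, evaluate $\varphi(y)$ classically, which takes time $\mathrm{poly}(|\varphi|)$, and output $\bot$ if it fails; correctness of the oracle makes this redundant, but it costs nothing.

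\emph{Distribution bound.} The output is $y^{(1+j)}$ with $j=x\bmod K$ and $x$ uniform on $[0,N)$. Write $N=qK+r$ with $0\le r<K$. Then $r$ residues have probability $(q+1)/N$ and $K-r$ residues have probability $q/N$. A direct computation gives
\[
\|\mu_\varphi-\mathrm{Unif}\|_{\mathrm{TV}} = \frac{r(K-r)}{NK}\le \frac{K}{4N}\le \frac{2^n}{4\cdot 2^{2n}} = 2^{-n-2}.
\]
This is precisely why $N=2^{2n}$ was chosen.

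\emph{Case 2 (time evolution).} Since $D_\varphi$ is diagonal, $\ket{x}$ is an eigenvector of $U_\varphi$ with phase $e^{2\pi i d_x}$. Because $d_x$ has exactly $n+1$ binary digits, textbook phase estimation with $n+1$ ancilla qubits and input $\ket{x}$ for uniform random $x$ returns $\mathrm{bin}(2^{n+1}d_x)$ with probability one. From there the argument is identical to Case 1. Controlled-$U_\varphi$ is obtained gate by gate from the explicit polynomial-size circuit output by $\mathsf{Comp}$.

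\emph{Main obstacle.} The difficulty is the controlled powers $U_\varphi^{2^j}$ for $j$ up to $n$: computing them by repeated squaring costs $2^n$ applications. My first attempt is to realise $e^{2\pi i 2^jD_\varphi}$ as $U_{\varphi_j}$ for an auxiliary CNF $\varphi_j$ built from $\varphi$. The formula $\varphi_j$ would keep the same satisfying count $K$ and the same lexicographic order, with variables rearranged so that $\mathrm{enc}$ of the $k$-th solution of $\varphi_j$ equals $2^j\mathrm{enc}(y^{(k)})\bmod 1$ up to a known phase correction that can be compensated by a fixed diagonal gate. One would then call $\mathsf{Comp}(\varphi_j)$ for each $j$, which is polynomial since there are only $n+1$ of them.

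If that alignment cannot be arranged, the fallback is an iterative single-bit phase estimation that uses only $U_\varphi$ itself, reading the lowest-precision bit first. I expect making this step rigorous to be the delicate part of the proof.

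\emph{Conclusion.} Either way, we obtain a polynomial-time almost-uniform sampler for $\mathrm{Sat}(\varphi)$. SAT is self-reducible, so the Jerrum--Valiant--Vazirani equivalence between almost-uniform generation and approximate counting then gives the claimed FPRAS-type consequence for $\#\mathrm{SAT}$.
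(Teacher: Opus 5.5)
Your value-oracle case, the decoding $w=2\,\mathrm{int}(y)+1$, the total-variation computation via $N=qK+r$, and the final appeal to JVV coincide with the paper's proof. The gap is the time-evolution case, and you have located it correctly. Phase estimation with $n+1$ ancillas needs controlled $U_\varphi^{2^j}$ for all $j\le n$, i.e.\ $2^{n+1}-1$ controlled uses of the compiled circuit. The paper's proof only says ``run standard quantum phase estimation for $U_\varphi$ with $(n+1)$ ancilla qubits'' and never accounts for this cost. Your objection therefore applies to the paper's argument too, and there is no device there to borrow. Unfortunately, neither of your two remedies closes the gap.

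\emph{The fallback.} It is ruled out by a Heisenberg-type hybrid argument. Take $\mathrm{Sat}(\varphi)=\{y\}$ and $\mathrm{Sat}(\varphi')=\{y'\}$ with $y,y'$ consecutive in lexicographic order. Then $U_\varphi$ and $U_{\varphi'}$ are the scalars $e^{2\pi i\,\mathrm{enc}(y)}$ and $e^{2\pi i\,\mathrm{enc}(y')}$, so their controlled versions differ in operator norm by at most $2\pi 2^{-n}$. Any procedure that learns $y$ from $T$ calls to $U_\varphi$, treated as a black box, therefore ends in states within $2\pi T2^{-n}$ of each other. Distinguishing $y$ from $y'$ then forces $T=\Omega(2^n)$, whatever order you read the bits in.

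\emph{The auxiliary formulas $\varphi_j$.} These cannot exist either, under your own constraints of keeping $K$ and the ordering. Since $\mathrm{enc}$ is injective, for any CNF $\psi$ the eigenphases of $U_\psi$ on the $K_\psi$ residue classes are pairwise distinct. By contrast, $U_\varphi^{2^j}$ has eigenphase $(0.y_{j+1}\cdots y_n1)_2$, so it merges any two solutions that agree in their last $n-j$ bits; for $\varphi\equiv\mathrm{true}$ it has only $2^{n-j}$ distinct eigenvalues. No constant phase repairs this. An $x$-dependent correction would have to be a non-constant function of $x\bmod K$ built from the ordered solution list, which is exactly what you are trying to extract.

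\emph{A repair.} Use the universality of $\mathsf{Comp}$ to \emph{decide} rather than decode. Let $\psi=b_1\wedge\neg b_2\wedge\varphi$ with two fresh leading variables. Then $U_\psi=I$ if $K=0$, while if $K>0$ every eigenphase of $U_\psi$ equals $\tfrac12+\tfrac14\mathrm{enc}(y)\in(\tfrac12,\tfrac34)$. One Hadamard test with controlled $U_\psi$ therefore detects satisfiability with one-sided error below $\tfrac12$, and amplification gives a SAT decider. Self-reducibility together with a standard NP-oracle uniform generator (Bellare--Goldreich--Petrank, or Stockmeyer counting followed by JVV) then yields the required sampler and the $\#\mathrm{SAT}$ consequence.
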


\begin{proof}
Fix a CNF formula $\varphi$ on $n$ variables, and let $K:=|\mathrm{Sat}(\varphi)|$.
We have established that $D_\varphi$ belongs to the class of \Cref{def:tamed}.
We now show how to obtain an almost-uniform sampler for satisfying assignments.

\emph{Value-oracle case.}
Assume first that $\mathsf{Comp}(\varphi)$ outputs a circuit for $O_{D_\varphi}$.
Choose $X$ uniformly at random from $\{0,\dots,N-1\}$, run the oracle on $|X\rangle|0^{n+1}\rangle$, and measure the second register.
If $K=0$, then all diagonal entries are $0$, so the measurement outcome is the all-zero string, and we output $\bot$.
If $K>0$, then the measured value is $\mathrm{bin}(2^{n+1} d_X)$, and we output the corresponding satisfying assignment $y$, such that $\mathrm{enc}(y) = d_X$.
Hence, when $K>0$, the output distribution is exactly the distribution obtained by choosing $X\in\{0,\dots,N-1\}$ uniformly and returning $y^{(1+(X\bmod K))}$.

\emph{Time-evolution case.}
Assume instead that $\mathsf{Comp}(\varphi)$ outputs a circuit for $U_\varphi=e^{2\pi i D_\varphi}$.
Prepare the uniform superposition 
$$\frac{1}{\sqrt N}\sum_{x=0}^{N-1}|x\rangle,$$
and run standard quantum phase estimation for $U_\varphi$ with $(n+1)$ ancilla qubits.

Each computational basis state $|x\rangle$ is an eigenvector of $U_\varphi$ with eigenphase $d_x$. 
Since every $d_x$ has an integer numerator for denominator $2^{n+1}$, phase estimation with $(n+1)$ ancillas returns the exact binary expansion of $d_x$.
Therefore the final state is
$$\frac{1}{\sqrt N}\sum_{x=0}^{N-1} |x\rangle\bigl|\mathrm{bin}(2^{n+1}d_x)\bigr\rangle.$$
Measuring the phase register gives exactly the same classical distribution as in the value-oracle case.  
Again, outcome $0^{n+1}$ occurs iff $K=0$, in which case we output $\bot$.
Otherwise the outcome uniquely decodes to a satisfying assignment.

\emph{Approximation of distribution}
We have established that it was possible to sample uniformly the diagonal $D$ with either a value-oracle or a time evolution oracle.
Because the size of the diagonal is most likely not a integer multiple of the number of the satisfying assignments, we do not exactly the uniform distribution over satisfying assignments, but one that is ever so slightly biased towards the first ones in lexicographic order.
We need to bound this distance from the uniform distribution on satisfying assignments.
Assume $K>0$ and write
$$N = qK + r,\qquad 0\le r < K.$$
Then exactly $r$ satisfying assignments occur with multiplicity $q+1$ among the values
$d_0,\dots,d_{N-1}$, and the remaining $K-r$ occur with multiplicity $q$.
Hence the sampling distribution $\mu_\varphi$ satisfies
$$\mu_\varphi(y)=
\begin{cases}
\frac{q+1}{N}, & \text{for exactly } r \text{ assignments},\\[1ex]
\frac{q}{N}, & \text{for the other } K-r \text{ assignments}.
\end{cases}$$
Comparing with the uniform distribution $u_\varphi(y)=1/K$ in the total variation distance, we obtain
$$\|\mu_\varphi-u_\varphi\|_{\mathrm{TV}}
=
\frac12\!\left(
r\Bigl|\frac{q+1}{N}-\frac1K\Bigr|
+
(K-r)\Bigl|\frac{q}{N}-\frac1K\Bigr|
\right).$$
Since
$$\frac{q+1}{N}-\frac1K = \frac{K-r}{NK},
\qquad
\frac1K-\frac{q}{N} = \frac{r}{NK},$$
it follows that
$$\|\mu_\varphi-u_\varphi\|_{\mathrm{TV}}
=
\frac{r(K-r)}{NK}
\le
\frac{K}{4N}.$$

Now using $K\le 2^n$ and $N=2^{2n}$, we have
$$\|\mu_\varphi-u_\varphi\|_{\mathrm{TV}}
\le
\frac{2^n}{4\cdot 2^{2n}}
=
2^{-n-2}.$$
So the sampler is exponentially close to uniform.

We have therefore shown that a uniform polynomial-time compiler for matrices as in \Cref{def:tamed} yields an almost-uniform generator for SAT witnesses.  
By the standard Jerrum--Valiant--Vazirani theorem~\cite{jerrum1986random-0fd}, almost-uniform generation for SAT implies approximate counting for SAT.  
Therefore such a compiler would yield a polynomial-time approximation scheme for $\#\mathrm{SAT}$.
\end{proof}

\section{Geometry from dynamical material formation}
\label{app:procedural}
We consider \Cref{prob:precedure} and detail the construction, but first we sharpen some statements.
Since the state in \Cref{eq:Ua-def} is only defined up to a global phase, the sign pattern of $a$ is only determined up to a global flip $a \mapsto -a$. 
Accordingly, we assume throughout that a global-sign convention has been fixed. 
We recall the promise that negative entries are separated from zero:
\begin{equation}
    \label{eq:gap-promise}
    a_k \in [-1,-\eta]\cup[0,1]
    \qquad\text{for all }k.
\end{equation}

Thus zero and positive entries are both classified as nonnegative, while negative entries have margin at least \(\eta\).

We now describe the oracle-conversion procedure in detail using the unitary $U$ as a black box. 
The idea revolves around amplitude estimation, with a state preparation $A$ and a projector $\Pi$.
The algorithm uses five registers:
\begin{itemize}
    \item an $n$-qubit index register $i$ containing the query $k$,
    \item an $n$-qubit work register $w$,
    \item a one-qubit Hadamard-test ancilla $h$,
    \item an $m$-qubit phase-estimation register $p$,
    \item a one-qubit output register $o$.
\end{itemize}
The input is
\begin{equation}
\ket{k}_i \ket{0}_h \ket{0^n}_w \ket{0^m}_p \ket{0}_o.
\end{equation}

\paragraph{Step 1: translation unitary.}
Define the controlled translation
\begin{equation}
    T
    :=
    \sum_{k=0}^{N-1}\ket{k}\!\bra{k}_i \otimes X(k)_w,
    \qquad
    X(k)\ket{x}=\ket{x\oplus k}.
\label{eq:T-app}
\end{equation}
In the standard binary encoding, $T$ is simply the bitwise XOR from register $i$ into register $w$, and is therefore implemented by $n$ CNOT gates:
\begin{equation}
    T = \prod_{j=1}^n \mathrm{CNOT}_{i_j \to w_j}.
\end{equation}
Now set
\begin{equation}
    W := T (I_i \otimes U_w).
\label{eq:W-app}
\end{equation}
Then
\begin{equation}
    W \ket{k}_i \ket{0^n}_w
    =
    \ket{k}_i \, X(k)U\ket{0^n}_w,
\end{equation}
and
\begin{equation}
    \bra{0^n}X(k)U\ket{0^n}
    =
    \bra{k}U\ket{0^n}
    =
    \frac{a_k}{\|a\|_2}.
\label{eq:matrix-element-app}
\end{equation}

\paragraph{Step 2: Hadamard-test state preparation.}
Define the Hadamard test
\begin{equation}
    A
    =
    \left(I_i \otimes H_h \otimes I_w\right)
    \left(
    \ket{0}\!\bra{0}_h \otimes I_{iw}
    +
    \ket{1}\!\bra{1}_h \otimes W_{iw}
    \right)
    \left(I_i \otimes H_h \otimes I_w\right).
    \label{eq:A-app}
\end{equation}
Applying $A$ on the basis state $\ket{k}_i\ket{0}_h\ket{0^n}_w$, yields the initial state of the amplitude amplification, it gives
\begin{equation}
    A\ket{k}\ket{0}\ket{0^n}
    =
    \frac{1}{2}\ket{k}
    \left[
    \ket{0}\!\left(\ket{0^n}+X(k)U\ket{0^n}\right)
    +
    \ket{1}\!\left(\ket{0^n}-X(k)U\ket{0^n}\right)
    \right].
\label{eq:A-action-app}
\end{equation}
Let the good subspace be zero on the Hadamard qubit $\ket{0}_h$, the projector is:
\begin{equation}
    \Pi := I_i \otimes \ket{0}\!\bra{0}_h \otimes I_w.
\label{eq:Pi-app}
\end{equation}
Then we get a success probability conditioned on $k$ of projecting the prepared state onto the good space as
\begin{align}
    p_k
    &:=
    \left\|
    \Pi A \ket{k}\ket{0}\ket{0^n}
    \right\|^2 \notag\\
    &=
    \frac14
    \left\|
    \ket{0^n}+X(k)U\ket{0^n}
    \right\|^2 \notag\\
    &=
    \frac12
    \left(
    1+\Re\bra{0^n}X(k)U\ket{0^n}
    \right) \notag\\
    &=
    \frac12
    \left(
    1+\frac{a_k}{\|a\|_2}
    \right),
\label{eq:pk-app}
\end{align}
where we used \Cref{eq:matrix-element-app} and the fact that $a_k$ is real.

\paragraph{Step 3: invariant two-dimensional subspace.}
For each fixed $k$, define the initial state as
\begin{equation}
    \ket{\psi_k}
    :=
    A\ket{k}\ket{0}\ket{0^n}.
\end{equation}
Let the associated good and bad vectors be
\begin{equation}
    \ket{G_k}    :=    \frac{\Pi\ket{\psi_k}}{\sqrt{p_k}},
    \qquad
    \ket{B_k}    :=    \frac{(I-\Pi)\ket{\psi_k}}{\sqrt{1-p_k}}.
\end{equation}
Then
\begin{equation}
    \ket{\psi_k}
    =
    \sqrt{p_k}\,\ket{G_k}
    +
    \sqrt{1-p_k}\,\ket{B_k}.
\label{eq:psi-decomp-app}
\end{equation}
The two-dimensional space
\begin{equation}
    \mathcal{H}_k := \mathrm{span}\{\ket{G_k},\ket{B_k}\}
\end{equation}
is invariant under the Grover iterate defined below.

\paragraph{Step 4: Grover iterate and eigenphases.}
Define the reflections along the zero state and the projector as
\begin{equation}
    S_\Pi := I - 2\Pi,
    \qquad
    S_0 := I - 2\Bigl(I_i \otimes \ket{0}\!\bra{0}_h \otimes \ket{0^n}\!\bra{0^n}_w\Bigr),
\label{eq:reflections-app}
\end{equation}
and the Grover iterate
\begin{equation}
    Q := -A S_0 A^\dagger S_\Pi.
\label{eq:Q-app}
\end{equation}
Because the index register is preserved throughout, the operator $Q$ acts independently on each fixed-$k$ sector. 
In the basis $\{\ket{G_k},\ket{B_k}\}$, a direct calculation gives
\begin{equation}
    Q\big|_{\mathcal{H}_k}
    =
    \begin{pmatrix}
    1-2p_k & 2\sqrt{p_k(1-p_k)}\\
    -2\sqrt{p_k(1-p_k)} & 1-2p_k
\end{pmatrix}.
\label{eq:Q-matrix-app}
\end{equation}
Writing
\begin{equation}
    \sin^2\theta_k = p_k,    \qquad    \theta_k \in [0,\pi/2],
\label{eq:theta-def-app}
\end{equation}
\Cref{eq:Q-matrix-app} becomes
\begin{equation}
Q\big|_{\mathcal{H}_k} =
    \begin{pmatrix}
    \cos(2\theta_k) & \sin(2\theta_k)\\
    -\sin(2\theta_k) & \cos(2\theta_k)
    \end{pmatrix},
\label{eq:rotation-app}
\end{equation}
so the eigenvalues of $Q$ on $\mathcal{H}_k$ are $e^{\pm 2 i \theta_k}$. 
Using \Cref{eq:pk-app},
\begin{equation}
    \sin^2\theta_k
    =
    \frac12\left(1+\frac{a_k}{\|a\|_2}\right),
\end{equation}
Under the promise in \Cref{eq:gap-promise}, negative and nonnegative entries fall into two separated regimes. If \(a_k\le -\eta\), then
\begin{equation}
    \sin^2\theta_k
    \le
    \frac12-\frac{\eta}{2\|a\|_2},
\end{equation}
whereas if \(a_k\ge 0\), then
\begin{equation}
    \sin^2\theta_k
    \ge
    \frac12.
\end{equation}
We therefore choose the intermediate threshold
\begin{equation}
    \theta_\star
    :=
    \arcsin\sqrt{
        \frac12-\frac{\eta}{4\|a\|_2}
    }.
\label{eq:theta-threshold-app}
\end{equation}
Then
\begin{equation}
    a_k<0
    \iff
    \theta_k<\theta_\star,
\label{eq:sign-via-theta-app}
\end{equation}
with a separation of order \(\eta/\|a\|_2\) in \(\theta_k\).

\paragraph{Step 5: coherent phase estimation.}
Apply standard $m$-qubit phase estimation to $Q$, controlled by the phase register $p$. 
Starting from $\ket{\psi_k}\ket{0^m}$, phase estimation produces an estimate of the eigenphase $\varphi_k \in [0,1)$ such that
\begin{equation}
    e^{2\pi i \varphi_k} \in \{e^{2i\theta_k}, e^{-2i\theta_k}\}.
\end{equation}
Equivalently,
\begin{equation}
    \varphi_k \in \left\{ \frac{\theta_k}{\pi},\, 1-\frac{\theta_k}{\pi} \right\}.
\end{equation}
Thus both branches encode the same quantity $\theta_k$, which can be recovered reversibly via
\begin{equation}
    \theta_k
    =
    \pi \min\{\varphi_k,\,1-\varphi_k\}.
\label{eq:theta-from-phase-app}
\end{equation}
Let $R$ denote the reversible arithmetic circuit that maps the $m$-bit phase estimate $\widetilde{\varphi}$ to an $m$-bit estimate $\widetilde{\theta}$ of $\theta_k$ according to \Cref{eq:theta-from-phase-app}. 
The threshold \(\theta_\star\) from \Cref{eq:theta-threshold-app} lies strictly between the two promised regimes. 
The two cases are separated by a gap of order \(\eta/\|a\|_2\), so it suffices to estimate \(\theta_k\) to additive precision
\begin{equation}
    \delta = \Theta\!\left(\frac{\eta}{\|a\|_2}\right).
\end{equation}
Therefore one may choose
\begin{equation}
    m = O\!\left(\log \frac{\|a\|_2}{\eta}\right),
\end{equation}
and the number of controlled applications of $Q$ is
\begin{equation}
    O\!\left(\frac{\|a\|_2}{\eta}\right).
\end{equation}

\paragraph{Step 6: threshold comparison and uncomputation.}
Define a reversible comparator \(C\) acting on the register containing \(\widetilde{\theta}\) and the output qubit \(o\) by
\begin{equation}
    C \ket{\widetilde{\theta}}\ket{0}
    =
    \ket{\widetilde{\theta}}
    \ket{\mathbf{1}[\widetilde{\theta}<\theta_\star]}.
\label{eq:comparator-app}
\end{equation}
By \Cref{eq:sign-via-theta-app}, this writes the sign bit \(\mathbf{1}[a_k<0]\) into the output qubit, provided the phase-estimation error is below the promised gap.

\paragraph{Full circuit}
The circuit performs the following steps:
\begin{enumerate}
    \item prepare the Hadamard-test state $A\ket{k}\ket{0}\ket{0^n}$,
    \item estimate the eigenphase of $Q$ into register $P$,
    \item compute $\widetilde{\theta}$ from the phase estimate,
    \item compare \(\widetilde{\theta}\) with \(\theta_\star\) and write \(\mathbf{1}[a_k<0]\) into the output qubit,
    \item uncompute the arithmetic workspace and the phase-estimation register,
    \item apply $A^\dagger$ to restore the ancilla and work registers to $\ket{0}\ket{0^n}$.
\end{enumerate}
The complete oracle can be expressed as
\begin{equation}
    \widetilde V_a
    =
    A^\dagger\,
    \mathrm{PE}_m(Q)^\dagger\,
    R^\dagger\,
    C\,
    R\,
    \mathrm{PE}_m(Q)\,
    A,
\label{eq:full-oracle-app}
\end{equation}
where $\mathrm{PE}_m(Q)$ denotes the phase-estimation unitary for $Q$. 
Acting on the initial state,
\begin{equation}
    \ket{k}_i \ket{0}_h \ket{0^n}_w \ket{0^m}_p \ket{0}_o,
\end{equation}

Thus, up to the bounded error of phase estimation,
\begin{equation}
    \widetilde V_a
    \ket{k}\ket{0}\ket{0^n}\ket{0^m}\ket{0}
    =
    \ket{k}\ket{0}\ket{0^n}\ket{0^m}\ket{\mathbf{1}[a_k<0]}.
\end{equation}
The total query complexity is
\begin{equation}
    O\!\left(\frac{\|a\|_2}{\eta}\right)
\end{equation}
controlled uses of $U$ and $U^\dagger$, together with $O(n)$ elementary gates for the translation unitary $T$ and $O(\log(\|a\|_2/\eta))$ ancilla qubits for phase estimation.
\end{document}